\newtheorem{defin}{Definition}[section]
\newtheorem{definition}{Definition}[section]
\newtheorem{theorem}[defin]{Theorem}
\newtheorem{coro}[defin]{Corollary}
\newtheorem{observation}[defin]{Observation}
\newtheorem{lemma}[defin]{Lemma}
\newcommand{\length}{\ell}
\newcommand{\dist}{\weightedDist}
\newcommand{\weight}{\omega}
\newcommand{\weightedDist}{\delta}
\newcommand{\WR}{\operatorname {wr }}
\newcommand{\BR}{\operatorname {BR }}
\newcommand{\WM}{\operatorname {wm }}
\newcommand{\BM}{\operatorname {BM }}
\newcommand{\LB}{\operatorname {LB }}
\newcommand{\BT}{\operatorname {BT }}
\newcommand{\HP}{\mathit{HP}}
\newcommand{\MST}{\operatorname{MST}}
\newcommand{\rad}{\operatorname{Rad}}
\newcommand{\graph}[1]{\left\langle #1\right\rangle}
\newcommand{\old}[1]{{}}
\def\A{{\cal A}}
\def\B{{\cal B}}
\def\R{{\mathbb R}}
\newcommand{\eqndef}{\stackrel{\mbox{\tiny def}}{{=}}}
\newcommand{\set}[1]{\left\{ #1 \right\}}
\newcommand{\size}[1]{\left| #1 \right|}
\def\def\IPEfile{#}\input{#}1{\def\IPEfile{#1}\input{#1}}
\def\cal{\mathcal}
\newcommand{\lemref}[1]{Lemma~\ref{lem:#1}}
\newcommand{\thmref}[1]{Theorem~\ref{thm:#1}}
\newcommand{\figref}[1]{Figure~\ref{fig:#1}}
\journal{Theoretical Computer Science} 
\begin{document}
\begin{frontmatter}
\title{Optimizing Budget Allocation in Graphs}

\author{Boaz Ben-Moshe\corref{cor1}}
 \ead{benmo@g.ariel.ac.il}
 \address{Department of Computer Science and Mathematics, Ariel University, Ariel, Israel.}

\author{Michael Elkin\fnref{fn1}}
 \ead{elkinm@cs.bgu.ac.il}
 \address{Department of Computer Science, Ben-Gurion University of the Negev, Beer-Sheva, Israel.}

\author{Lee-Ad Gottlieb}
 \ead{leead@ariel.ac.il}
 \address{Department of Computer Science and Mathematics, Ariel University, Ariel, Israel.}

\author{Eran Omri}
 \ead{omrier@ariel.ac.il}
 \address{Department of Computer Science and Mathematics, Ariel University, Ariel, Israel.}

\cortext[cor1]{Corresponding author}
\fntext[fn1]{This author's research has been supported by the Binational Science Foundation, grant No. 2008390.}

\begin{abstract}
In a classical facility location problem we consider a graph $G$ with fixed weights on the edges of $G$. 
The goal is then to find an optimal positioning for a set of facilities on the graph with respect to some objective function.
We introduce a new framework for facility location problems, where the weights on the graph edges are not fixed, but rather 
should be assigned. The goal is to find the valid assignment for which the resulting weighted graph optimizes the 
facility location objective function.
%
 We present algorithms for finding the optimal {\em budget allocation} for the center point problem and for the median 
 point problem on trees. Our algorithms work in linear time, both for the case that a candidate vertex is given as part 
 of the input, and for the case where finding a vertex that optimizes the solution is part of the problem. 
 We also present a hardness result on the general case of the center point problem, followed by an $O(\log^2(n))$ 
 approximation algorithm for the general case - on general metric spaces.
\end{abstract}

\begin{keyword}
  Facility Location \sep Graph Radius \sep Budget Graphs
\end{keyword}

\end{frontmatter}

\section{Introduction}
A typical facility location problem has the following structure: the input includes a weighted set $D$ of demand locations, a set $F$ of feasible facility locations, and a distance function $d$ that measures the cost of travel between a pair of locations. For each $F' \subseteq F$, the quality of $F'$ is determined by some underlying
objective function ($obj$). The goal is to find a subset of facilities
$F' \subseteq F$, such that $obj(F')$ is optimized (maximized or minimized). 
One important class of facility location problems is the {\em center point}, 
in which the goal is to find one facility in $F$ that minimizes the maximum distance between
a demand point and the facility. Henceforth, we refer to this distance as {\em graph radius}.
In another important class of problems, {\em graph median},
 the goal is to find the facility in $F$ that minimizes the average distance (i.e., the sum of the distances) 
 between a demand point and the facility. 
In this paper we consider a new model for facility location on graphs, for which both problems are addressed.

\subsection{The New Model}
This paper\thanks{A preliminary version appeared as \cite{Ben-MosheEO11}} suggests a new model for budget allocation problems on weighted graphs.
The new model addresses optimization problems of allocating a fixed budget onto the graph edges where the goal is to find a subgraph that optimizes some objective function (e.g., minimizing the graph radius).
Problems such as center point and median point on trees and graphs have been studied extensively \cite{NDL_D_95,m-we1cp-83,LTUA_D_05,57954}. 
Yet, in most cases the input for such problems consists of a given (fixed) graph. 
Motivated by well-known budget optimization problems~\cite{1115584,ChepoiV02,OMCBDFBMDEAP-92,966446} raised in the context of communication networks, we consider the graph to be a communication graph, where the weight of each edge (link) corresponds to the delay time of transferring a (fixed length) message over the link. We suggest a {\em Quality of Service} model for which the weight of each edge in the graph depends on the budget assigned to it. In other words, paying more for a communication link decreases its delay time.

More formally, we consider an undirected graph $G=\graph{V,E}$ equipped with a length function $\length(\cdot)$ on the edges. The weighted graph 
$(G,\length)$ induces a metric space. i.e., the length function $\length(\cdot)$ satisfies the triangle inequality. Let $B$ be a positive value, which we call a {\em budget parameter}. In our model, allocating a budget 
$\B(e) \ge 0$ to an edge $e \in E$ of length $\length(e)$ means that the {\em weight}  of $e$ will be $w(e) = w_{\B}(e) = \frac{\length(e)}{\B(e)}$ (in case $\B(e) = 0$, we let $w(e)=+\infty$).

The distance between two vertices $v$ and $u$ in $(G,\B)$ is the sum of weights on the edges of a minimal weighted path between them, denoted as $\dist_{\B}(v,u)$.
The {\em radius} of $(G,w)$ with respect to a designated vertex $r$  
$\rad_r(G,w)$ is the maximum distance (with respect to the underlying allocation $\B(\cdot)$) between $r$ and any other vertex $v$ in the graph.
In the {\em rooted budget radius} problem, we are given a graph $G= \graph{V,E}$, equipped with a length function $\length(\cdot)$, a budget parameter $B$ and a designated vertex $r$. The goal is to assigned none-negative values $\B(e)$ to the edges of $G$ that sum up to $B$ (i.e., $\Sigma_{e \in E}(\B(e)) = B$),  and so that the rooted radius of the graph $(G,w)$ with $w=\frac{\length}{\B^{c}}$ (for some positive constant $c$; in this paper we mainly consider $c$ to be 1), with respect to the vertex $r$ is minimized.

In the {\em unrooted budget radius} problem, we are given a graph $G= \graph{V,E}$, equipped with a length function $\length(\cdot)$, and a budget parameter $B$. The goal is to minimize the rooted budget radius problem over all choices of $r\in V$. That is, our goal is to find a vertex $r$ and a corresponding assignment $\B(e)$ that minimizes the budget radius problem with respect to $G$, $B$, and $r$.
The \emph{budget diameter} problem is defined analogously, where the goal is to minimize the maximum distance between any pair of vertices in $(G,\weight)$.

In the {\em budget median} problem, we are given a graph $G= \graph{V,E}$, equipped with a length function $\length(\cdot)$, and a budget parameter $B$. The goal is to find a vertex $r$ and a corresponding assignment $\B(e)$ that minimize the \emph{average distance} between $r$ and other vertices of the graph, defined as $\frac{1}{n}\cdot\sum_{v\in V}{\dist_{\B}(r,v)}$. The vertex $r$ that minimizes the budget median problem is called the {\em budget median}.

\old{
Divide $B$ among the edges of $E$ in a manner that the accumulated weight of any simple path from $r$ to any other node (that is, the new radius) is minimized. Using this {\em weight} budget-function we can also define an unrooted version of the budget radius problem, i.e, finding the minimum solution over all the vertices in $G$. Other related problems are (a) {\em median point}: minimizing the graph average weight and
(b) graph {\em diameter}: minimizing the weight of the heaviest simple shortest path in the graph. 
Noteworthy is that the model presented applies to trees as well as to general (connected) graphs.
}
\subsection{Motivation}
We were motivated by communication optimization problems in which for a fixed 'budget' one 
needs to design the 'best' network layout. The quality of 
service ({\em QoS}) of a link between two nodes depends on two main factors:
i) The distance between the nodes.
ii) The infra-structure of the link (between the two nodes).
While the location of the nodes is often fixed and cannot be changed, the infra-structure 
type and service can be upgraded - it is a price-dependent service.

Quality of service is related to different parameters like, bandwidth, delay time, jitter, packet error rate and many others.
Given a network graph, the desired objective is to have the best {\em QoS} for a given (fixed) budget. 
In this paper we focus on minimizing the maximum and the average delay time using a fixed budget.

\subsection{Related Work}
The problems of Center Point, Median Point on graphs (networks) have been studied extensively, see \cite{NDL_D_95,LTUA_D_05} for a detailed surveys on 
{\em facility location}. There are various optimization problems dealing with finding the best graph;
A typical {\em graph or network improvement} problem considers a graph which needs 
to be improved by adding the smallest number of edges in order to satisfy some constraint (e.g., maximal radius), 
see \cite{1115584,ChepoiNV03,OMCBDFBMDEAP-92,966446}. 
Spanner graph problems \cite{GSN_NS_07} consider what can be seen as the {\em inverse} case of {\em network improvement problems}. In a typical spanner problem we would like 
to keep the smallest subset of edges from the original graph while maintaining some constraint. See \cite{GSN_NS_07} for a detailed survey on spanners. 
Observe that both {\em network improvement} and {\em spanner graph} problems can be modeled as a discrete version of our suggested new model.

\subsection{Our Contribution}
In this paper we present linear time algorithms for rooted and unrooted budget radius and budget median problems on trees.
We also prove that the general version of the problem is N.P.-hard and devise an $O(\log^2(n))$ approximation algorithm for the budget radius problem on general metric spaces.

\section{Preliminaries}
In this section we introduce basic notations and definitions that are used for describing the suggested budget graph framework. Let $G = \graph{V,E}$ be a graph with some length function $\length:E\mapsto\R^+$. A {\em valid budget allocation} $\B(\cdot)$ to $E$ is a non-negative real function, such that 
$\sum_{e \in E}{\B(e)} = 1$ (here and in the rest of the paper we assume that the total budget $B$ equals 1; this is without loss of generality since an optimal solution with budget of 1 is easily scaled to any budget $B$). Let $E = \{e_1, \dots, e_{\left| E \right|}\}$. We denote $b_i \eqndef \B(e_i)$, and for every $E' \subseteq E$ we denote $\B(E') = \sum_{e_i \in E'}{b_i}$.
Given a valid budget allocation ${\B}$ to $E$, the {\em weight} of an edge  $e\in E$, 
denoted $\weight_{{\B}}(e)$, is a function of $\length(e)$ and $\B(e)$. Throughout this 
paper we consider the case where $\weight_{\B}(e) \eqndef \frac{\length(e)}{\B(e)}$.
\begin{definition}[weighted distance]
Let $u,v\in V$ be two vertices. The {\em weighted distance} between $v$ and $u$, denoted $\weightedDist_{\B}(v,u)$, is the minimum weight over all simple paths between $u$ and $v$. Namely, 
$\weightedDist_{\B}(v,u) \eqndef \min(\{\sum_{e \in P}{\weight_{\B}(e)}: 
P \text{ is a simple path from } v \text{ to } u\})$.
\end{definition}
Table \ref{tab:notations} summarizes the above notations. In the rest of this section we present two specific facility location problems in the budget graph framework. In Section~\ref{sec:BR}, we describe the budget radius problem and in Section~\ref{sec:BM}, we describe the budget median problem.
\begin{table}[htb!]
\begin{center}
\begin{tabular}{|c|c|}
\hline
\textbf{Notation} & \textbf{Explanation} \\
\hline
\textbf{$G = \graph{V,E}$} & a general undirected graph \\&(induced by some metric space)\\
\hline
\textbf{$\length(e)$} & the (a priori) length of an edge $e \in E$.\\
\hline
\textbf{$\B(e)$} & the budget fraction allocated to $e \in E$.\\
\hline
\textbf{$\B$=$\{b_1,...b_{|E|}\}$} & an alternative notation for the function $\B$.\\
\hline
\textbf{$\weight(e)=\frac{\length(e)}{\B(e)}$} & the (budget implied) weight of $e \in E$.\\
\hline
\textbf{$(G,w)=G(\B)$} & the {\em budget-graph} implied by an allocation $\B$\\
\hline
\textbf{$\weightedDist_{\B}(v,u)$} & the  distance between two vertices in $G(\B)$.\\
\hline
\end{tabular}
\caption{Notations that are used throughout the paper to present the new budget graph model.}\label{tab:notations}  
\end{center}
\end{table}

\subsection{The Budget Radius Problem}\label{sec:BR}
We next introduce some definitions and notations to define the setting of the {\em budget radius} problem on graphs. In the first setting we consider, a candidate center node to the graph is given and the goal is to find an optimal budget allocation that minimizes the radius of graph with respect to the resulting distances between nodes in the graph (the weighted radius). In a second setting, a candidate center node is not given and the goal is to find both a center node and a budget allocation that are together optimal with respect to the weighted radius. The formal definitions follow. 

\begin{definition}[rooted weighted radius]
Given a valid budget allocation $\B$ to $E$ and a vertex $r \in V$, the {\em weighted radius} of $G$ with respect to $r$ is defined as $\WR_{\B} (r) = \WR_{\B} (G,r) \eqndef \max_{v\in V}(\weightedDist_{\B}(r,v))$.
\end{definition}  
Given a graph $G = \graph{V,E}$ (induced by some metric) and a node $r \in V$, we next define what it means for an allocation to be optimal with respect to the budget radius problem. We usually omit $G$ from the notation whenever it can be derived from the context.
\begin{definition}[rooted budget radius]
An optimal allocation for $(G,r)$ with respect to the budget radius problem is a valid allocation for which the weighted radius with respect to $r$ is minimized. There may be several optimal allocations. We take an arbitrary optimal allocation and denote it by $\B_r^* = \B_r^*(G)$. We further refer to this allocation as {\em the optimal allocation} for $(G,r)$.

The {\em budget radius} of $G$ with root $r$, denoted $\BR(r) = \BR(G,r)$, is the weighted radius of $G$ with respect to $r$ and $\B_r^*$, i.e.,  $\BR(r) = \WR_{\B_r^*} (r)$.
\end{definition}

We next give the appropriate definitions for the setting where the root is not given as part of the input to the problem. We call this problem the {\em unrooted} budget radius problem. The general setting, where the goal is to find a node in the graph that minimizes the radius is sometimes referred to as the center point problem. 
\begin{definition}[unrooted budget radius]
The {\em budget radius} of a graph $G$  is  the value $\BR = \BR(G) \eqndef \min_{v\in V}{\BR(G,v)}$.  We refer to a pair $(\B^*,r^*)$ as the optimal allocation for $G$ (in the unrooted setting) if $\B^*$ is a valid allocation to $E$ and $r^*\in V$ is the vertex with the smallest corresponding radius, i.e., $\WR_{\B^*} (r^*) = \BR$.
\end{definition}
 
We demonstrate the above definitions using the few examples presented in \figref{1}.
\begin{figure}[bth]
\centering{\includegraphics[scale=0.850]{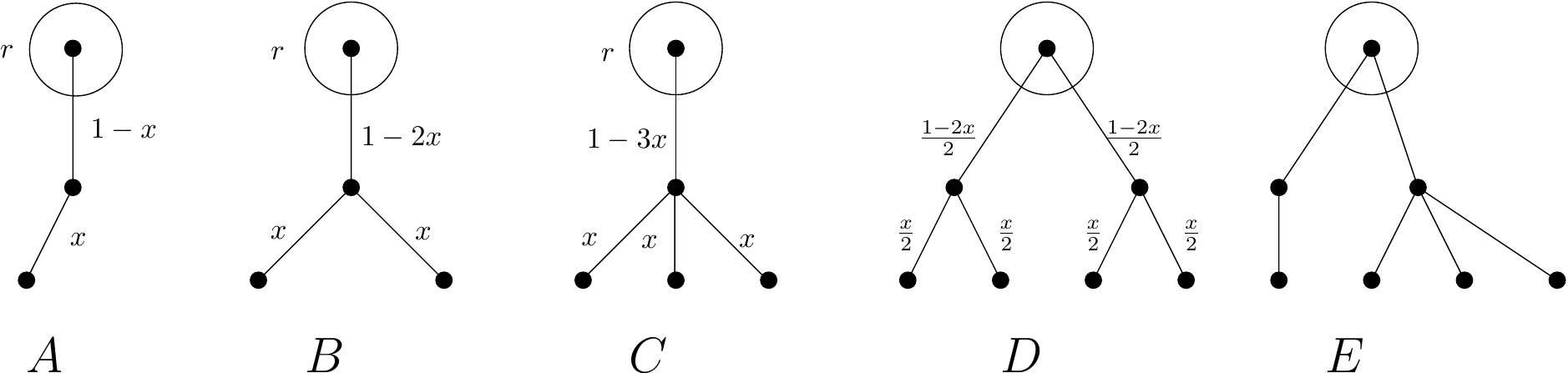}}
\begin{footnotesize}
\caption{Few simple examples of a budget graph radius problem on rooted trees (with a given center $r$). 
Assume that each edge $e$ has length $\length(e)=1$. The total budget is 1 and the optimal budget allocation of each 
edge is shown as a function of $x$. 
Consider case $B$: the optimal solution minimizes the following function: $f(x) = \frac{1}{1-2\cdot x} + \frac{1}{x}$.
Note that in cases where $x$ equals $\frac{1}{3}$ or $\frac{1}{4}$ the radius is 6, while the optimal allocation of $x$ 
is approximately $0.293$, and the radius is approximately $5.828$.
In case $C$ the function is: $f(x) = \frac{1}{1-3\cdot x} + \frac{1}{x}$.
For which $x$ is approximately $0.211$, 
and the radius is approximately $7.464$.
Case $D$ is composed from two cases of $B$ therefore the radius is twice the budget radius of $B = 11.656$.
Case $E$ is composed from cases $A$,$C$ therefore the radius is the sum of the two radii $= 11.464$.}
\label{fig:1}
\end{footnotesize}
\end{figure}

\subsection{The Budget Median Problem}\label{sec:BM}
Given a graph, the general median problem is defined as follows: find a node ($m^*$) in graph from 
which the sum of all weighted distances to all other nodes in the graph is minimized. 
On {\em budget graphs}, the {\em budget median} problem is to find a median node and a corresponding optimal allocation which minimizes the sum of distances between the median node and all the other nodes in the graph (with respect to the budget allocation). 

\begin{definition}[rooted weighted budget median]
Given a valid budget allocation $\B$ to $E$ and a vertex $m \in V$, the {\em weighted median} of $G$ with respect to $m$ is defined as $\WM_{\B} (m) = \WM_{\B} (G,m) \eqndef \sum_{v\in V}(\weightedDist_{\B}(m,v))$.
\end{definition}  

Given a graph $G = \graph{V,E}$ (induced by some metric) and a node $m \in V$, we next define what it means for an allocation to be optimal with respect to the budget median problem. 
\begin{definition}[rooted budget median]
An optimal allocation for $(G,m)$ with respect to the budget median problem is a valid allocation for which the sum of all weighted distances to $m$ is minimized. There may be several optimal allocations. We take an arbitrary optimal allocation and denote it by $\B_m^* = \B_m^*(G)$. We further refer to this allocation as {\em the optimal allocation} for $(G,m)$.

The {\em budget median} of $G$ with root $m$, denoted $\BM(m) = \BM(G,m)$, is the weighted sum of distances
from $m$ to all other nodes of $G$ with respect to $\B_m^*$, i.e.,  $\BM(m) = \WM_{\B_m^*} (m)$.
\end{definition}

We next give the appropriate definitions for the setting where the root is not given as part of the input to the problem. We call this problem the {\em unrooted} budget median problem. 
\begin{definition}[unrooted budget median]
The {\em budget median} of a graph $G$  is  the value $\BM = \BM(G) \eqndef \min_{v\in V}{\BM(G,m)}$.  We refer to a pair $(\B^*,m^*)$ as the optimal allocation for $G$ (in the unrooted setting) if $\B^*$ is a valid allocation to $E$ and $m^*\in V$ is the vertex with the smallest corresponding sum of distances to all other nodes in the graph, i.e., $\WM_{\B^*} (m^*) = \BM$.
\end{definition}


\section{The Budget Radius Problem for Trees}

In this section, we solve the budget radius problem for trees.
We start by showing that the solution to the budget radius problem for a general graph $G$ is always in the form of a tree spanning $G$. In Section~\ref{sec:RootedTree}, we solve the rooted budget radius problem for trees. In Section~\ref{sec:UnRootedTree}, we turn to the unrooted variant of the problem.

\begin{lemma}\label{lem:optimalallocationIsTree}
Let $G=\graph{V,E}$ be a connected graph and let $\length$ be a length function on $E$. An optimal budget allocation (with respect to the budget radius problem) $(B^*,r^*)$ for $G$, has the property that $H = \graph{V,E_{\B^*}}$, where $E_{\B^*}=\{e_i\in E : b^*_i > 0\}$ is a tree spanning $G$. 
\end{lemma}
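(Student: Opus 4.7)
The plan is to prove the two defining properties of a spanning tree separately: (i) the subgraph $H$ is connected and spans $V$, and (ii) $H$ is acyclic.

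For (i), I would argue by contradiction. Suppose some vertex $v$ lies outside the $H$-component of $r^*$. Then every $G$-path from $r^*$ to $v$ must traverse at least one edge $e$ with $\B^*(e)=0$, and such an edge has weight $\weight_{\B^*}(e)=+\infty$. Hence $\weightedDist_{\B^*}(r^*,v)=+\infty$, so $\WR_{\B^*}(r^*)=+\infty$. But since $G$ is connected, the uniform allocation $\B(e)=1/|E|$ yields a finite radius from any vertex, contradicting the optimality of $(\B^*,r^*)$.

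For (ii), I would suppose $H$ contains a cycle and derive a strictly improving reallocation. Let $T\subseteq E_{\B^*}$ be a shortest-path tree of $H$ rooted at $r^*$ under the weights $\weight_{\B^*}(\cdot)$; since $H$ is connected by (i), $T$ is a spanning tree of $H$, and the cycle hypothesis forces $E_{\B^*}\setminus T$ to be nonempty, giving $b_0 \eqndef \sum_{e \in E_{\B^*}\setminus T}\B^*(e)>0$. I would then define a new allocation $\B'$ by setting $\B'(e)=0$ for $e\notin T$ and $\B'(e)=\B^*(e)+b_0/(|V|-1)$ for $e\in T$; this is a valid allocation summing to $1$. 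Under $\B'$, every $T$-edge has strictly smaller weight than under $\B^*$, while every non-$T$ edge has infinite weight. Consequently, the only finite $r^*$-to-$v$ paths are along $T$, and for each $v \neq r^*$ the unique $T$-path has strictly smaller total weight than under $\B^*$. Since the $T$-path weight under $\B^*$ equals $\weightedDist_{\B^*}(r^*,v)$, we get $\weightedDist_{\B'}(r^*,v) < \weightedDist_{\B^*}(r^*,v) \leq \WR_{\B^*}(r^*)$ for every $v\neq r^*$, so $\WR_{\B'}(r^*)<\WR_{\B^*}(r^*)$, contradicting the optimality of $(\B^*,r^*)$.

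The delicate step is securing a \emph{strict} radius improvement in (ii). A naive attempt that routes all of $b_0$ to a single edge on the shortest path to a single maximizing vertex $v^*$ can fail when several vertices simultaneously attain the maximum distance through disjoint branches of $T$. Spreading $b_0$ uniformly across all $|V|-1$ edges of $T$ avoids this pitfall: every root-to-vertex $T$-path contains at least one edge whose weight strictly decreases, so all distances from $r^*$ strictly drop at once. A minor subtlety is verifying that $T$ exists and is indeed a spanning subtree of $H$; this follows from (i) together with standard shortest-path-tree construction in the finite-weight subgraph $(V,E_{\B^*})$.
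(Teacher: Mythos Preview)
Your proposal is correct and follows essentially the same approach as the paper. Both of you establish connectedness by contrasting an infinite radius with the finite radius of the uniform allocation, and both handle acyclicity by taking a shortest-path tree of $H$ rooted at $r^*$ and redistributing the off-tree budget uniformly over the tree edges to force a strict decrease in every root-to-vertex distance; the only cosmetic difference is that the paper nominally moves the budget of a single off-tree edge, while you move all of $b_0$ at once.
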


\begin{proof}
First, assume towards a contradiction that $H$ is not connected. Then, for any center vertex $r^*$, there is a vertex $v\in V$ such that there is no path in $H$ connecting $r^*$ to $v$. In this case, however, any path between $r^*$ and $v$ in $(G,w) =G(\B^*)$ contains an edge of infinite weight. Hence, $\dist_{\B^*}(r,v) =\infty$, and so the radius implied by the budget allocation $\B^*$ is $\infty$. This means that $\B^*$ is not optimal, since there are allocations yielding a finite radius, e.g., the uniform allocation, assigning an equal portion of the budget to each edge (i.e., $\B(e)=\frac{1}{\size{E}}$ for all $e\in E$). 
 
Second, assume towards a contradiction that $H$ contains a cycle. Let $H'$ be the shortest path tree rooted at $r^*$. Hence, there exists at least one  edge $e\in E_{H}\setminus E_{H'}$.  By definition, $\B^*(e) > 0$. We define a new budget allocation $\B'$ distributing the budget portion allocated to $e$ evenly among all edges in $H'$. It is easy to verify that the budget radius of $\B'$ is strictly smaller than that of $\B^*$. This is a contradiction to the optimality of $\B^*$. Hence, $H$ is a spanning tree. 
\end{proof}
An analogous argument applies to the budget median problem.
We note, however, that the above is not true for the budget diameter problem (see figure \ref{fig:1.5}).
\begin{figure}[bth]
\centering{\includegraphics[scale=0.5]{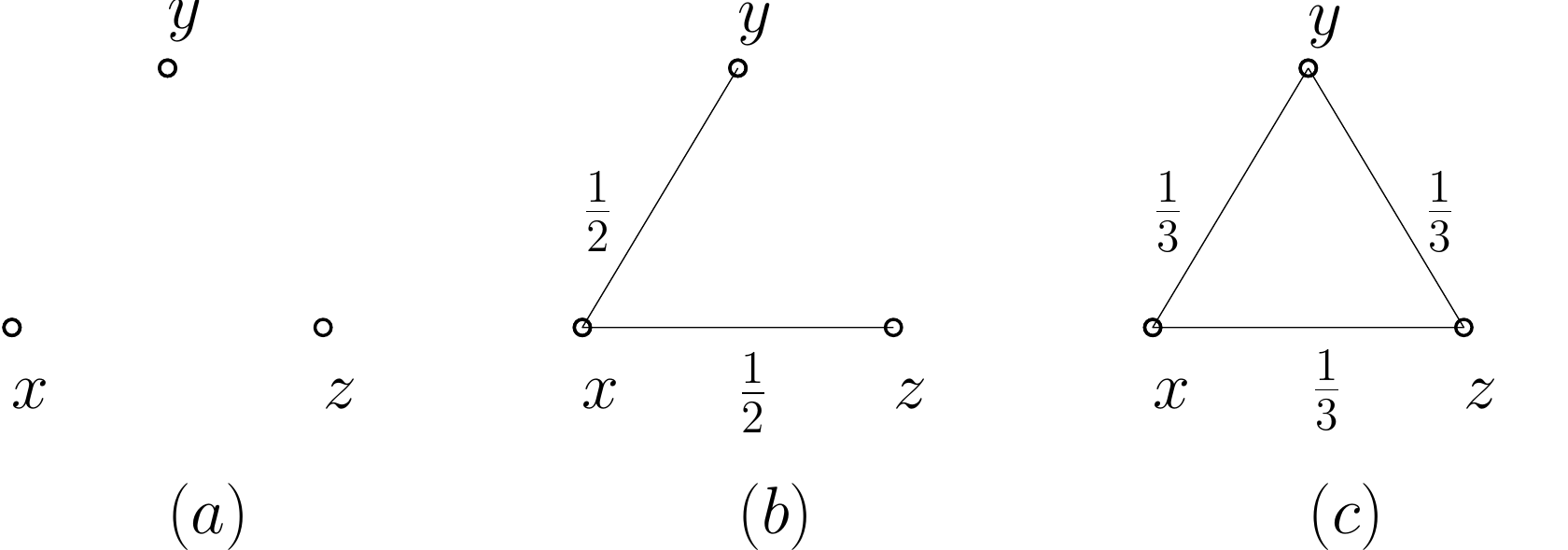}}
\begin{footnotesize}
\caption{ $(a)$ Three points in the plane: $x,y,z$ are the nodes of a unite equilateral triangle. $(b)$ Tree: optimal radius (2), non optimal diameter (4). $(c)$ Cycle graph: non-optimal radius (3), optimal diameter (3).}
\label{fig:1.5}
\end{footnotesize}
\end{figure}
 
The above lemma suggests that it is interesting to consider the Budget Radius problem 
for the subclass of trees. In the sequel, we present an algorithm solving this problem.
We first consider the case where a designated center node $r$ is given as a part of the input, and an optimal budget allocation
$\B^*$ is sought. We use the standard terminology and refer to $r$ as the root of the tree 
(rather than, the center). Thereafter, we consider the general case in trees,
where the problem is to find a pair $(\B^*, r^*)$ minimizing the budget radius of the tree.  

\subsection {The Budget Radius for Rooted Trees}\label{sec:RootedTree}
  We next consider two sub-families of rooted trees that will later be the basis for our recursive construction of an optimal valid budget allocation to the edges of general trees. First, we consider a tree in which the root has just one child.
\begin{lemma} \label{lem:centerOneChildSimpleCase}
   Let $T$ be a tree rooted at $r$, with some length function $\length$ on the edges of $T$. Assume $r$ has a single child $r'$ (the root of the subtree $T'$), and let $R' =\BR(T',r')$ and $q = \length(r,r')$. Then, an optimal budget allocation $\B^*$ assigns to the edge $e = (r,r')$ a fraction $\beta =\frac{\sqrt{q}}{\sqrt{R'}+\sqrt{q}}$.  It follows that $\BR(T,r) = \frac{q}{\beta} + \frac{R'}{1-\beta}$. 
\end {lemma}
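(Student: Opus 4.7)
The plan is to reduce the problem to a one-dimensional optimization over the single free parameter $\beta = \B(e)$, where $e=(r,r')$ is the unique edge incident to the root. The two key ingredients are a scaling property of the weight function, which lets us express the optimal behavior inside $T'$ in closed form, and a standard calculus minimization of a convex univariate function.

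First, I would establish the scaling property: for any tree $T''$ rooted at a vertex $u$ and any $\alpha>0$, if we require the total budget allocated to the edges of $T''$ to equal $\alpha$ (rather than $1$), then the minimum achievable weighted radius from $u$ is exactly $\BR(T'',u)/\alpha$, attained by taking the optimal unit-budget allocation $\B^*_u(T'')$ and multiplying every value by $\alpha$. This follows immediately from $\weight_{\B}(e')=\length(e')/\B(e')$: scaling all budgets by $\alpha$ scales all edge weights, and hence all path weights and the radius, by $1/\alpha$. Optimality is preserved because the map $\B\mapsto \alpha\B$ is a bijection between valid allocations of budget $1$ and valid allocations of budget $\alpha$.

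Next I would fix a candidate value $\beta\in(0,1)$ for $\B^*(e)$, leaving budget $1-\beta$ for the edges of $T'$. Any vertex $v\in V(T)$ is either $r$ itself or lies in $T'$, and the only simple path from $r$ to $v$ passes through $e$ and then continues in $T'$; so $\weightedDist_\B(r,v)=q/\beta+\weightedDist_{\B|_{T'}}(r',v)$ for $v\in T'$, and the $T$-radius from $r$ equals $q/\beta+\WR_{\B|_{T'}}(r')$. By the scaling property, the minimum of $\WR_{\B|_{T'}}(r')$ over allocations on $T'$ summing to $1-\beta$ is $R'/(1-\beta)$. Hence, for this fixed $\beta$, the best total radius from $r$ is
\begin{equation*}
f(\beta) \;=\; \frac{q}{\beta}+\frac{R'}{1-\beta},
\end{equation*}
and an optimal $\B^*$ for $T$ rooted at $r$ is obtained by taking any $\beta\in(0,1)$ that minimizes $f$ and then applying the $(1-\beta)$-scaled optimal allocation $\B^*_{r'}(T')$ to $T'$.

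Finally, I would minimize $f$ over $(0,1)$. The function is strictly convex (each term is convex and positive on $(0,1)$) and tends to $+\infty$ at both endpoints, so it has a unique interior minimizer. Setting $f'(\beta)=-q/\beta^2+R'/(1-\beta)^2=0$ gives $(1-\beta)/\beta=\sqrt{R'/q}$, whence $\beta=\sqrt{q}/(\sqrt{R'}+\sqrt{q})$, which matches the claimed value. Substituting back yields $\BR(T,r)=q/\beta+R'/(1-\beta)=(\sqrt{q}+\sqrt{R'})^2/1$, which is exactly the formula $q/\beta+R'/(1-\beta)$ stated in the lemma. The only subtle step is the scaling argument in the first paragraph; once that is in hand, the rest is a routine univariate optimization.
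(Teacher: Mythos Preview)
Your proposal is correct and follows essentially the same approach as the paper's own proof: both reduce to the one-variable optimization $f(\beta)=q/\beta+R'/(1-\beta)$ via the observation that the restriction of any allocation to $T'$, once rescaled to total budget $1$, must itself be optimal for $(T',r')$. You are simply more explicit about the scaling bijection and the calculus (convexity, derivative, unique interior minimizer), whereas the paper states the minimizing $\beta$ without carrying out the computation.
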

%
\begin{proof}
   Let $E$ be the set of edges in $T$ and $E' = E\backslash \set{e}$ be the set of edges of $T'$ (see \figref{2}-a). Given any valid budget allocation $\B$ to $E$, let $\B'$ be the scaling of the restriction of $\B$ to $E'$, defined by $\B'(e') = \frac{\B(e')}{1-\B(e)}$ for every $e' \in E'$. Note that with this scaling, $\B'$ is a valid budget allocation to $E'$, i.e., $\sum_{e' \in E'}{\B'(e')} = 1$. Since any path from $r$ to any leaf of $T$ must start with the edge $e = (r,r')$, it follows that $$\WR_{\B}(r) = \frac{q}{\B(e)} + \frac{\WR_{\B'}(r')}{1-\B(e)}.$$
   Hence, for $\B$ to be optimal for $T$ with root $r$, we must have $\B'$ be optimal for $T'$ and $r'$. In addition, given $R' =\BR(T',r')$, the budget radius of $T$ with root $r$ is obtained by assigning a $\beta$ fraction of the budget to $e$, for $\beta$ that minimizes the function $\WR_{\B}(r) = \frac{q}{\beta} + \frac{R'}{1-\beta}$. It follows that $\BR(T,r) = \frac{q}{\beta} + \frac{R'}{1-\beta}$ for $\beta = \frac{\sqrt{q}}{\sqrt{R'}+\sqrt{q}}$. 
\end{proof}

   We next consider the general family of (rooted) trees (\figref{2}-b). Let $T = (V,E)$ be a tree, rooted at $r$, such that $r$ has $k$ children $r_1,r_2, \dots, r_k$, where $r_i$ is the root of the subtree $T_i = (V_i,E_i)$. Denote by $T'_i = (V'_i,E'_i)$ the subtree of $T$, rooted at $r$ and containing $T_i$. Formally, $V'_i = V_i\cup \{r\}$, and $E'_i = E_i \cup\{(r,r_i)\}$. Clearly, the edges sets $E'_i$s are disjoint. Given a valid budget allocation $\B$ to $E$, for each index $i \in \{1,2,..,k\}$ denote by $l_i \in V_i$ the leaf $l$ for which $\weightedDist_{\B}(r,l)$ is the largest within $T'_i$. 
   Recall that the weighted radius $\WR_{\B}(r)$ is determined by the maximum weighted distance to some $l_i$. In other words, $\WR_{\B}(r) =  \max_{1 \le i \le k}(\weightedDist_{\B}(r,l_i))$. 
   Next, we show that in any optimal budget allocation $\B^*$ for such $T$, the fraction of the budget assigned to the edges of each subtree $T'_i$ is directly correlated 
   to its relative weighted radius.   

\begin{figure}[htb]
\centering{\includegraphics[scale=0.6]{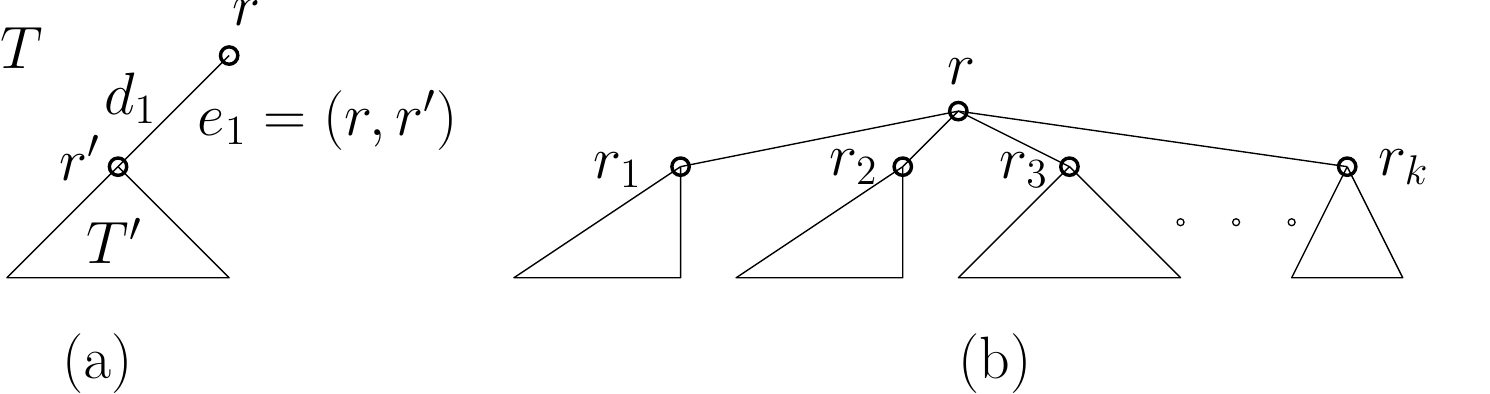}}
\begin{footnotesize}
\caption{(a) The case that $r$ has only a single child. (b) The general case.}
\label{fig:2}
\end{footnotesize}
\end{figure}

   \begin{lemma} \label{lem:centerKChildren}
   	Let $T = (V,E)$ be a tree rooted at $r$, with some length function $\length$ on the edges of $T$. Assume that $r$ has $k$ children $r_1,r_2, \dots, r_k$ where $r_i$ is the root of the subtree $T_i = (V_i,E_i)$, and let $\B^*$ be an optimal budget allocation to $E$. For each $1 \le i \le k$, let $T'_i$ and $l_i$ be as in the foregoing discussion (i.e., 	$l_i$ is maximal in $T'_i$ with respect to $\weightedDist_{\B^*}(r,\cdot)$). Then, for all $1 \le i,j \le k$ that $\weightedDist_{\B^*}(r,l_i)=\weightedDist_{\B^*}(r,l_j)$.
  \end {lemma}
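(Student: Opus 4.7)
The plan is to prove the lemma by contradiction: I will assume the maximum weighted distances $R_i \eqndef \weightedDist_{\B^*}(r,l_i)$ are not all equal, and then construct a valid allocation $\B^{**}$ that strictly improves on $\B^*$, contradicting optimality.

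First I would observe the following scaling identity. For each subtree $T_i'$, let $B_i = \sum_{e \in E_i'} \B^*(e)$, so that $\sum_{i=1}^k B_i = 1$, and write each $\B^*(e) = B_i \cdot \pi_i(e)$ where $\pi_i$ is a probability distribution on $E_i'$. Since every path from $r$ to a leaf $l_i \in V_i$ lies entirely inside $T_i'$, the weighted length of that path is $(1/B_i)\sum_{e \in \text{path}}\length(e)/\pi_i(e)$. Hence if we change only the totals (keeping the internal profiles $\pi_i$ fixed) to new values $B_i''$ with $\sum_i B_i'' = 1$, the resulting weighted radius restricted to $T_i'$ becomes exactly $R_i \cdot B_i / B_i''$ (the maximizing leaf $l_i$ is unchanged because all edges of $T_i'$ are rescaled by the same factor).

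Next I would exploit this identity. Let $R = \max_i R_i$; note $R < \infty$ because $\B^*$ beats the uniform allocation. Put $I = \{i : R_i = R\}$ and $J = \{i : R_i < R\}$; both are nonempty by the contradiction hypothesis, and $B_j > 0$ for every $j \in J$ (otherwise $R_j = \infty$). Fix some $j \in J$ and choose any weights $\epsilon_i > 0$ for $i \in I$ summing to $\epsilon > 0$. Define $\B^{**}$ by multiplying $\B^*$ on $E_i'$ by $(B_i + \epsilon_i)/B_i$ for each $i \in I$, by $(B_j - \epsilon)/B_j$ on $E_j'$, and leaving it unchanged on every other subtree. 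This is a valid budget allocation provided $\epsilon < B_j$.

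By the scaling identity, the new subtree radii are $R \cdot B_i / (B_i + \epsilon_i) < R$ for $i \in I$, $R_j \cdot B_j / (B_j - \epsilon)$ for $j$, and $R_m$ for the remaining subtrees. The $j$-th quantity tends to $R_j < R$ as $\epsilon \to 0^+$, so for sufficiently small $\epsilon$ it is strictly less than $R$; the unchanged subtrees already satisfy $R_m \le R$, and in fact $R_m < R$ if $m \in J$ while $m \in I \setminus\{i\}$ does not occur because we scaled every index in $I$. Therefore $\WR_{\B^{**}}(r) = \max_i R_i'' < R = \WR_{\B^*}(r)$, contradicting the optimality of $\B^*$.

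The main obstacle is articulating the local perturbation correctly: one must scale every subtree in $I$ simultaneously (not merely one of them) so as to strictly reduce every subtree that currently attains the maximum, while taking $\epsilon$ small enough that the donor subtree $T_j'$ still does not rise to $R$. Once the scaling identity $R_i'' = R_i \cdot B_i/B_i''$ is in hand, the rest is a straightforward continuity argument.
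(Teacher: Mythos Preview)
Your proof is correct and follows essentially the same contradiction-by-reallocation strategy as the paper. The only cosmetic difference is in the perturbation: the paper shrinks the budget on the lighter subtree $T_j'$ by a fixed factor and then spreads the freed budget \emph{uniformly over all edges of $T$} (so every path strictly shortens without needing to identify which subtrees attain the maximum), whereas you transfer budget multiplicatively from one $j\in J$ into every subtree in $I$ and close with a continuity argument in $\epsilon$; both achieve the same contradiction.
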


  \begin {proof} 
	  Assume that for some $1 \le i,j \le k$, it holds that 
	  $\weightedDist_{\B^*}(r,l_i) > \weightedDist_{\B^*}(r,l_j)$. We show that it is then
	  possible to present a better budget allocation for $T$, which, in turn, leads to a contradiction. 
	  Let $\rho = \frac{\weightedDist_{\B^*}(r,l_j)}{\weightedDist_{\B^*}(r,l_i)}$ and consider
	  an alternative budget allocation in which each edge $e$ in $E'_j$ was assigned a $\rho$ fraction of its current budget, i.e. $\rho \cdot \B^*(e)$ (while assignment to all other edges stays the same as before). The length of each path from $r$ to a leaf in $T'_j$ would be multiplied by 
	  $1/\rho$. Hence, the maximum distance from $r$ to any leaf in the $T'_j$ would be at most
	  $\weightedDist_{\B^*}(r,l_i)$. This allocation is therefore as good as $\B^*$ (with respect to
	  the weighted radius) although the sum of assigned values is not $1$, but rather, 
	  $1-(1-\rho) \cdot \B^*(E'_j)<1$. Turning it into a valid budget allocation by dividing the remaining
	  $(1-\rho)\cdot\B^*(E'_j)$ budget equally among all edges in $E$, we obtain a better valid budget 
	  allocation to $E$. That is, we fix a new allocation $\B'$ by 
	  setting $\B'(e) = \rho \cdot \B^*(e) + \frac{(1-\rho)\cdot\B^*(E'_j)}{\left|E\right|}$ if $e \in E'_j$,
	  and $\B'(e) = \B^*(e) + \frac{(1-\rho)\cdot\B^*(E'_j)}{\left|E\right|}$ otherwise. 
	  The budget radius of $\B'$ is smaller than that of $\B^*$, in contradiction to the optimality of $\B^*$.
	\end{proof}
	
	The following corollary describes how any optimal valid budget allocation must divide the 
	budget among the disjoint sets of edges of the subtrees $T'_i$.

\begin{coro} \label{cor:calculationMultipleChildren}	  
Let $T$ be a tree as above. Then in any optimal budget allocation $\B^*$ to $E$ it holds that $\B^*(E'_i) = \frac{ \BR(T'_i,r)}{\sum_{j=1}^k{ \BR(T'_j,r)}}$. Thus, an optimal solution in this case is given by $\BR(T,r) = \sum_{j=1}^k{\BR(T'_j,r)}$. 
\end{coro}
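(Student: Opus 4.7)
The plan is to combine the equalization property of Lemma~\ref{lem:centerKChildren} with a simple scaling argument that relates the budget mass on a subtree to its intrinsic budget radius. I would start by denoting the common value from Lemma~\ref{lem:centerKChildren} by $R = \WR_{\B^*}(r) = \weightedDist_{\B^*}(r,l_i)$, and setting $\beta_i = \B^*(E'_i)$ for each $i$. The goal becomes showing both $R \ge \sum_{j} \BR(T'_j,r)$ and $R \le \sum_{j} \BR(T'_j,r)$, from which the expression for $\beta_i$ follows by comparing the two sides of the chain of inequalities.

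For the lower bound, the key observation is the scaling identity: if $\B$ assigns total mass $\beta_i$ to $E'_i$, then its rescaling $\tilde\B(e) = \B(e)/\beta_i$ on $E'_i$ is a valid budget allocation on $T'_i$ whose induced edge weights satisfy $\weight_{\tilde\B}(e) = \beta_i\cdot\weight_{\B}(e)$, so that every $r$-to-leaf distance in $T'_i$ is multiplied by $\beta_i$. Applying this to $\B^*$ yields $\beta_i\cdot R = \WR_{\tilde{\B^*}}(r) \ge \BR(T'_i,r)$, since $\tilde{\B^*}$ competes with the optimum on $T'_i$. Summing $\beta_i \ge \BR(T'_i,r)/R$ over $i$ and using $\sum_i \beta_i = 1$ gives $R \ge \sum_j \BR(T'_j,r)$.

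For the upper bound, I would exhibit an explicit allocation: on each $E'_i$, take an optimal allocation $\B^{(i)}$ for $(T'_i,r)$ (which sums to $1$ and has weighted radius $\BR(T'_i,r)$), and set $\B(e) = \beta_i\cdot \B^{(i)}(e)$ for $e\in E'_i$, with $\beta_i = \BR(T'_i,r)/\sum_j \BR(T'_j,r)$. Since the $E'_i$ are disjoint and cover $E$, and $\sum_i \beta_i = 1$, this is a valid allocation; by the same scaling identity (in the opposite direction), the maximum $r$-to-leaf distance inside each $T'_i$ under $\B$ equals $\BR(T'_i,r)/\beta_i = \sum_j \BR(T'_j,r)$. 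Taking the maximum over $i$ yields $\WR_{\B}(r) = \sum_j \BR(T'_j,r)$, hence $\BR(T,r)\le \sum_j \BR(T'_j,r)$.

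Combining the two bounds forces $R = \sum_j \BR(T'_j,r)$, and then each inequality $\beta_i \ge \BR(T'_i,r)/R$ must be tight, giving the claimed $\beta_i$. The main subtlety I expect to handle carefully is the direction of the scaling: making sure that "restricting then renormalizing" multiplies distances by $\beta_i$, while "taking an optimum and rescaling down to mass $\beta_i$" divides by $\beta_i$; stating a single clean scaling lemma at the outset would prevent any sign-of-exponent confusion during the chain of inequalities.
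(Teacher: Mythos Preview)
Your proposal is correct. It differs in structure from the paper's argument, though both rest on the same scaling identity. The paper argues directly that the rescaled restriction $\B'_i$ of $\B^*$ to $E'_i$ is itself an \emph{optimal} allocation on $T'_i$: if it were not, one could swap in a better allocation on that branch, obtaining an allocation on $T$ that (by the equalization of \lemref{centerKChildren}) would violate the lemma's conclusion, contradicting optimality of $\B^*$. This yields the equality $\BR(T'_i,r)=\beta_i R$ outright, from which the formulas for $\beta_i$ and $\BR(T,r)$ follow by solving the linear system $\sum_i\beta_i=1$. Your route instead sandwiches: from scaling you get only the inequality $\beta_i R\ge\BR(T'_i,r)$, and you close the gap by an explicit construction achieving $\sum_j\BR(T'_j,r)$. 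What your approach buys is that the upper bound is entirely constructive and does not reinvoke \lemref{centerKChildren}; what the paper's approach buys is that it never needs to exhibit a competing allocation, and it delivers as a byproduct that the restriction of any optimum to each branch is itself optimal (a structural fact your tightness step recovers only a posteriori).
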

\begin {proof}
Let $l_i$'s be as above, i.e., $l_i \in V_i$ is the leaf $l$ for which $\weightedDist_{\B^*}(r,l)$ is maximal within $T'_i$.
Denote $\beta_i = \B^*(E'_i)$ and let $\B'_i$ be the scaling of the restriction of $\B^*$ to $E'_i$, defined by $b'_i(e') = \frac{\B^*(e')}{\beta_i}$ for every $e' \in E'_i$. Clearly, each $\B'_i$ is a valid budget allocation
to $T'_i$. We claim that it is also an optimal one. By \lemref{centerKChildren}, for all $1 \le i,j \le k$ it holds that $\weightedDist_{\B^*}(r,l_i)=\weightedDist_{\B^*}(r,l_j)$. If for some $i$ it holds that $\B'_i$ is not optimal 
for $T'_i$, then choose an optimal valid budget allocation $\B''_i$ for $T'_i$ and scale it back to obtain a valid 
budget allocation $\hat{\B}$ by setting $\hat{\B}(e') = {\B''(e')}\cdot{\beta_i}$ for each $e' \in E'_i$, and 
$\hat{\B}(e) = \B^*_i(e)$ for each $e' \notin E'_i$. 
This reduces the distance of the farthest leaf from $r$ within $T'_i$, while
the distance to any leaf outside $T'_i$ stays as with $\B^*$. Specifically, we have  
$\WR_{\hat{\B}} (r) \le \WR_{\B^*} (r)$. However, by \lemref{centerKChildren}, $\hat{\B}$ is not optimal
and hence, $\B^*$ is not optimal either -- contradiction.

By the above it holds for all $1 \le i \le k$ that $\BR(T'_i,r)={\beta_i}\cdot\weightedDist_{\B^*}(r,l_i)$. 
Thus, by \lemref{centerKChildren}, for all $1 \le i,j \le k$ it holds that  $\frac{\BR(T'_i,r)}{\beta_i} = \frac{\BR(T'_j,r)}{\beta_j}$. Since it also holds that $\sum_{i=1}^k{\beta_i}=1$, we have that for 
all $1 \le i \le k$ it holds that 
$\beta_i = 1-\sum_{j\neq i}{\beta_j} = 1-\beta_i\cdot\sum_{j\neq i}{ \frac{\BR(T'_j,r)}{\BR(T'_i,r)}}$. Hence,
$\beta_i = \frac{ \BR(T'_i,r)}{\sum_{j=1}^k{ \BR(T'_j,r)}}$. 
Furthermore, since for any $1 \le i \le k$ we have that $\BR(T,r) = \weightedDist_{\B^*}(r,l_i)$ (specifically,
since $\BR(T,r) = \weightedDist_{\B^*}(r,l_1)= \frac{\BR(T'_1,r)}{\beta_1}$), it follows that $\BR(T,r) = \sum_{j=1}^k{\BR(T'_j,r)}$.
\end{proof} 

\begin{theorem} \label{thm:optimalSolutionRootedTree}
Given a tree $T$ rooted at $r$, it is possible to find an optimal valid budget allocation for $T$ and $r$, in linear time in the size of $T$.
\end{theorem}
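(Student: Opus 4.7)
The plan is to extract an explicit bottom-up dynamic program directly from Lemma~\ref{lem:centerOneChildSimpleCase} and Corollary~\ref{cor:calculationMultipleChildren}. Both statements are fundamentally compositional: the budget radius of a subtree rooted at a vertex $v$ is determined in closed form by the budget radii of the subtrees rooted at the children of $v$ together with the lengths of the edges from $v$ to its children. This is precisely what one needs for a linear-time post-order computation.

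Concretely, I would process $T$ in post-order. For a leaf $v$, set $\BR(v) := 0$. For an internal vertex $v$ with children $v_1,\dots,v_k$ and edge lengths $q_i = \length(v,v_i)$, first compute for each $i$ the budget radius of the single-child subtree $T'_i$ (the edge $(v,v_i)$ together with the already-processed subtree rooted at $v_i$) by substituting $q = q_i$ and $R' = \BR(v_i)$ into the closed form of Lemma~\ref{lem:centerOneChildSimpleCase}; then set $\BR(v) := \sum_{i=1}^{k}\BR(T'_i,v)$ as prescribed by Corollary~\ref{cor:calculationMultipleChildren}. The degenerate case $R'=0$ (a pendant edge) is handled by the natural convention that the entire local budget goes to the single edge, contributing exactly $q_i$; this is easily confirmed by taking $\beta\to 1$ as a limit, or directly from the single-edge case.

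Once all $\BR$ values have been computed, a single top-down (pre-order) pass recovers the actual allocation $\B^*$. At each vertex $v$ the locally-available budget is split among the subtrees $T'_i$ in the proportions $\BR(T'_i,v)/\sum_{j}\BR(T'_j,v)$ dictated by Corollary~\ref{cor:calculationMultipleChildren}, and within each $T'_i$ the portion allocated to the edge $(v,v_i)$ versus the subtree rooted at $v_i$ is given by the formula $\beta = \sqrt{q_i}/(\sqrt{\BR(v_i)}+\sqrt{q_i})$ of Lemma~\ref{lem:centerOneChildSimpleCase}. Recursing down the tree, each edge receives its final fraction of the total budget.

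Correctness is proved by induction on the height of the subtree, invoking Lemma~\ref{lem:centerOneChildSimpleCase} at every single-child composition and Corollary~\ref{cor:calculationMultipleChildren} at every multi-child composition. For the running time, each vertex $v$ is visited a constant number of times, and the work performed at $v$ is $O(\deg(v))$: one arithmetic operation and one square root per child, plus a summation over the children. In the standard unit-cost arithmetic model this yields $O(\sum_{v}\deg(v))=O(|V|)$ total time, as required. I do not expect a substantive obstacle beyond setting up the induction cleanly; the real work was done in the preceding lemmas, which already make the optimal allocation local enough to yield a direct linear-time recursion.
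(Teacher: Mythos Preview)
Your proposal is correct and follows essentially the same approach as the paper: a bottom-up induction using Lemma~\ref{lem:centerOneChildSimpleCase} for the single-child composition and Corollary~\ref{cor:calculationMultipleChildren} for the multi-child composition, with linear time following from processing each subtree once. If anything, you are more explicit than the paper in describing the top-down pass that reconstructs the actual allocation $\B^*$ and in handling the degenerate pendant-edge case $R'=0$.
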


\begin {proof}
$T$ is a rooted tree, thus an inductive construction is only natural. First, assume $T$ is a single node $r$. In this case, no budget is needed and $\BR(T,r) = 0$.
Assume $T$ is rooted at $r$, such that $r$ has $k$ children $r_1,r_2, \dots, r_k$. Denote by $T_i$ the subtree of 	 
$T$ rooted at $r_i$, and containing all vertices (and edges) of the subtree rooted at $r_i$ (and only these vertices).
Denote by $T'_i = (V'_i,E'_i)$ the subtree of $T$ rooted at $r$, induced by adding the edge $(r,r_i)$ to $T_i$. Formally, $V'_i = V_i\cup \{r\}$, and $E'_i = E_i \cup\{(r,r_i)\}$. Thus, each $T'_i$ is a rooted tree where the root ($r$) has a single child, and no $T'_i, T'_j$ for $i \neq j$ share
any vertex other than $r$ and $E'_i, E'_j$ are disjoint for all $i \neq j$. 

By \corref{calculationMultipleChildren}, if we know $\BR(T'_i,r)$ for all
$1 \le i \le k$, we can derive an optimal valid budget allocation for $T,r$. 
In order to obtain a $\BR(T'_i,r)$, it suffices to have an optimal solution for the subtree of $r_i$, 
which, by the induction hypothesis can be done (using \lemref{centerOneChildSimpleCase}). 

Note, that we evaluate the optimal solution for every subtree of every vertex in $T$ exactly once and thus the procedure requires in linear time.
\end {proof}  

 The following lemma proves helpful in the sequel, but is interesting in its own right. It captures some of the tricky nature of the budget radius problem, as it shows the connection between two seemingly unrelated quantities. 
 The first is the weight of a minimum spanning tree ($\MST$) of a given graph and the second is the optimal solution for the budget radius problem for that graph.
 
\begin{lemma} \label{lem:budgetLargerThanSum}
 Given a tree $T=(V,E)$ rooted at $r$, with some length function $\length$ on $E$,
 the budget radius of $T$ is at least the sum of lengths of the edges of $T$, i.e.,
 $\BR(T,r) \ge \sum_{e \in E}{\length(e)}$.  
\end{lemma}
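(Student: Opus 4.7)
The plan is to prove the bound by structural induction on the rooted tree $T$, exploiting the recursive formulas already established in \lemref{centerOneChildSimpleCase} and \corref{calculationMultipleChildren}. The base case is trivial: if $T$ consists only of the root $r$, then both sides of the desired inequality are $0$.

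For the inductive step I would first handle the single-child case. Suppose $r$ has one child $r'$ that roots the subtree $T'$, and let $q = \length(r,r')$ and $R' = \BR(T',r')$. By \lemref{centerOneChildSimpleCase}, the optimal budget allocation assigns a fraction $\beta = \frac{\sqrt{q}}{\sqrt{q}+\sqrt{R'}}$ to the edge $(r,r')$, and a short algebraic simplification of the expression $\frac{q}{\beta} + \frac{R'}{1-\beta}$ collapses to $\BR(T,r) = (\sqrt{q}+\sqrt{R'})^2 = q + R' + 2\sqrt{qR'}$. In particular $\BR(T,r) \geq q + R'$. Invoking the induction hypothesis $R' \geq \sum_{e\in E'}\length(e)$ on the strictly smaller tree $T'$ yields
\[
\BR(T,r) \;\geq\; q + R' \;\geq\; q + \sum_{e\in E'}\length(e) \;=\; \sum_{e\in E}\length(e),
\]
as required.

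For the general case in which $r$ has children $r_1,\ldots,r_k$, I would apply \corref{calculationMultipleChildren} to write $\BR(T,r) = \sum_{i=1}^k \BR(T'_i,r)$, where each $T'_i$ is the rooted subtree formed by attaching $r$ to $T_i$ via the edge $(r,r_i)$. Each $T'_i$ is a tree in which $r$ has a single child, so the single-child case above already shows $\BR(T'_i,r) \geq \sum_{e\in E'_i}\length(e)$. Since the edge sets $E'_i$ are pairwise disjoint and together cover $E$, summing over $i$ gives $\BR(T,r) \geq \sum_{e\in E}\length(e)$.

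The only technical point that is not purely formal is the algebraic simplification in the single-child case, i.e., recognizing that the optimized value of $q/\beta + R'/(1-\beta)$ equals $(\sqrt{q}+\sqrt{R'})^2$, which comfortably exceeds $q+R'$ (and would in fact still work if one only uses the AM-GM bound $2\sqrt{qR'}\geq 0$). Everything else is routine bookkeeping of the disjoint subtree decomposition provided by the preceding lemmata, so no serious obstacle is expected.
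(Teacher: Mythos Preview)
Your proposal is correct and follows essentially the same approach as the paper: a structural induction that handles the single-child case via \lemref{centerOneChildSimpleCase} and the multi-child case via \corref{calculationMultipleChildren}. The only difference is cosmetic---where the paper argues informally that $\BR(T,r) > \length((r,r')) + \BR(T',r')$ from the fact that the optimal allocation gives positive budget to $(r,r')$, you make the inequality explicit by computing $\BR(T,r) = (\sqrt{q}+\sqrt{R'})^2 \ge q + R'$.
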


\begin{proof}
 We prove the above lemma by induction. If $T$ has no edges, then both values are $0$.
 If $r$ has only one child $r'$ (the root of the subtree $T'$), then by \lemref{centerOneChildSimpleCase}, 
 since any optimal allocation $\B^*$ must assign $\B^*((r,r'))>0$, we have  
 $\BR(T,r) > \length((r,r'))+\BR(T',r')$, which, by the induction hypothesis is at least 
 $\sum_{e \in E}{\length(e)}$.
 Otherwise, assume $r$ has $k$ children $(r_1 \dots r_k)$ and denote $T_i$ the subtree 
 induced by $r$ and the vertices of the subtree of $r_i$. By \corref {calculationMultipleChildren} 
 $\BR(T,r) = \sum_{i=1}^{k} \BR(T_i,r)$. Hence, by the induction hypothesis, the lemma follows.
\end{proof}

\subsection {The Budget Radius for Unrooted Trees}\label{sec:UnRootedTree}
In this section we consider the budget radius problem for unrooted trees, i.e., where the root of the tree is not given as part of the input. 
Clearly, one can invoke the algorithm from \thmref{optimalSolutionRootedTree} with every vertex $v$ as a candidate center vertex $r$, and select the vertex $v$ for which $\BR(T,v)$ is minimal as the ultimate center. However, this naive algorithm requires $O(n^2)$ time. We next show how to construct a linear time algorithm for this problem (indeed, for a tree $T$, our algorithm computes $\BR(T,v)$ for every $v$ in $T$). Intuitively, this protocol uses the fact that given $\BR(T,r)$ and the partial computations made by algorithm of \thmref{optimalSolutionRootedTree}, applied to the $T$ and $r$, it possible to compute in constant time $\BR(T,v)$ for every neighbor $v$ of $r$.
This intuition is formalized in~\lemref{constTimeComputeBR}.

\begin{lemma} \label{lem:constTimeComputeBR}
Let $T = (V,E)$ be a tree rooted at $r$, with some length function $\length$ on $E$.
Let $v\in V$ be a neighbor (a child) of $r$. Denote by $T_v=(V_v,E_v)$ the subtree of $v$, and denote by $T'_v$ the subtree of $v$ augmented by the edge $e = (r,v)$ (i.e.,
$T'_v=(V_v\cup r, E_v\cup e)$). It is possible to compute, in constant time, $\BR(T,v)$ given $\BR(T,r)$, $\BR(T_v,v)$, and $\BR(T'_v,r)$, see Figure \ref{fig:3}.
\end {lemma}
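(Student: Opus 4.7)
The plan is to re-root $T$ at $v$, decompose it into two subtrees hanging off $v$, and apply the two structural results already proved (Lemma~\ref{lem:centerOneChildSimpleCase} and Corollary~\ref{cor:calculationMultipleChildren}) to each piece. When $T$ is viewed as rooted at $v$, the neighbors of $v$ are its original children in $T_v$, namely some $w_1,\ldots,w_m$, together with $r$. Thus the subtrees of $v$ in the re-rooted $T$ split naturally into (i) the subtrees that form $T_v$, and (ii) a single ``upward'' subtree $S'$ consisting of the edge $(v,r)$ followed by the tree $S$ obtained from $T$ by deleting $T'_v$ but keeping $r$ as its root.

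The first step is to express $\BR(T,v)$ via this decomposition. Applying Corollary~\ref{cor:calculationMultipleChildren} to $T$ rooted at $v$, and observing that the contribution of the subtrees rooted at the $w_j$'s is exactly $\BR(T_v,v)$ (by applying the corollary once more to $T_v$), one obtains
\[
\BR(T,v) \;=\; \BR(T_v,v) \;+\; \BR(S',v).
\]
The value $\BR(T_v,v)$ is given, so it remains to compute $\BR(S',v)$ in constant time.

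The second step is to identify $\BR(S,r)$. Applying Corollary~\ref{cor:calculationMultipleChildren} to $T$ rooted at $r$ and splitting the sum into the contribution of $T'_v$ and the rest yields
\[
\BR(T,r) \;=\; \BR(T'_v,r) \;+\; \BR(S,r),
\]
so that $\BR(S,r) = \BR(T,r) - \BR(T'_v,r)$ is immediately available from the three given quantities. Finally, since $v$ has a single child ($r$) in $S'$, Lemma~\ref{lem:centerOneChildSimpleCase} applies with $q = \length(r,v)$ and $R' = \BR(S,r)$, giving
\[
\BR(S',v) \;=\; \tfrac{q}{\beta} + \tfrac{R'}{1-\beta}, \qquad \beta = \tfrac{\sqrt{q}}{\sqrt{R'}+\sqrt{q}},
\]
which is a closed-form expression evaluable in $O(1)$ time. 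Adding $\BR(T_v,v)$ then gives $\BR(T,v)$, completing the proof.

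The only conceptually nontrivial point — and the step I would double-check carefully — is justifying that re-rooting at $v$ really does yield the clean two-way decomposition above, i.e.\ that the upward subtree $S'$ and the downward subtree $T_v$ behave, from $v$'s vantage point, exactly like two independent children-subtrees to which Corollary~\ref{cor:calculationMultipleChildren} applies. This follows because the corollary is stated for an arbitrary rooted tree whose root has $k\ge 1$ edge-disjoint rooted subtrees, which is precisely the situation at $v$ after re-rooting. A minor degenerate case worth noting is when $v$ is the only child of $r$ in $T$, in which case $S$ is the single-vertex tree with $\BR(S,r)=0$ and the formula still reduces correctly (the whole upward budget is spent on the edge $(r,v)$, giving $\BR(S',v)=q$).
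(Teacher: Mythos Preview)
Your proposal is correct and follows essentially the same route as the paper's proof: your $S$ and $S'$ are exactly the paper's $\hat{T}'$ and $\hat{T}$, and the three steps (decompose $\BR(T,v)$ via Corollary~\ref{cor:calculationMultipleChildren}, recover $\BR(S,r)=\BR(T,r)-\BR(T'_v,r)$ by the same corollary, then apply Lemma~\ref{lem:centerOneChildSimpleCase} to get $\BR(S',v)$) match the paper's argument line for line. Your justification of the first decomposition and your handling of the degenerate case are, if anything, slightly more explicit than the paper's.
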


\begin {proof} 
Denote by $\hat{T}$ the tree obtained by omitting $T_v$ from $T$, formally, 
$\hat{T} = (\hat{V},\hat{E})$, where 
$\hat{V} = V\backslash (V_v \backslash \set{v})$ and 
$\hat{E} = E \backslash E_v$. In addition, denote by $\hat{T'}$ the tree obtained by   
omitting the edge $e=(r,v)$ from $\hat{T}$, i.e., 
$\hat{T}' = (\hat{V}\backslash \set{v},\hat{E}\backslash \set{e})$.

It can be easily derived from \corref{calculationMultipleChildren} that $\BR(T,v) = \BR(T_v,v)+\BR(\hat{T},v)$. 
By \lemref{centerOneChildSimpleCase}, we can compute $\BR(\hat{T},v)$
from $\BR(\hat{T}',r)$ and $\length(e)$, in constant time. Finally, we compute $\BR(\hat{T}',v)$, 
using \corref{calculationMultipleChildren} again, to obtain $\BR(\hat{T'},r) = \BR(T,r) - \BR(T'_v,r)$.
\end{proof}

\begin{figure}[bth]
\centering{\includegraphics[scale=0.40]{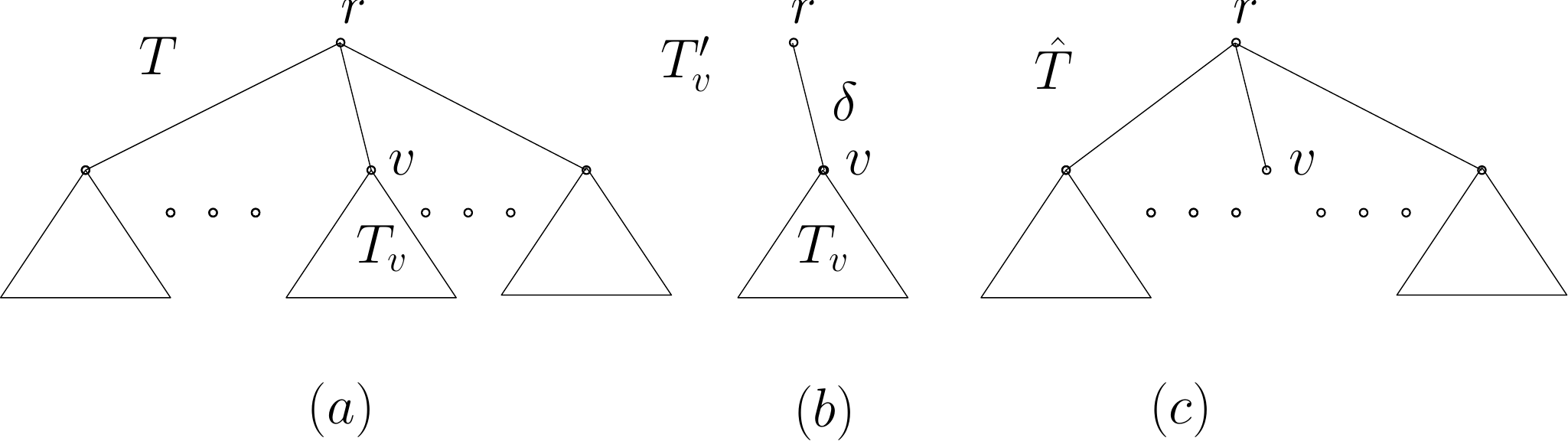}}
\begin{footnotesize}
\caption{(a) The original tree rooted at $r$. (b) Considering $v$ as the root of $T'_v$. (c) The tree $\hat{T}$.}
\label{fig:3}
\end{footnotesize}
\end{figure}
  
\begin{figure}[bth]
\centering{\includegraphics[scale=0.40]{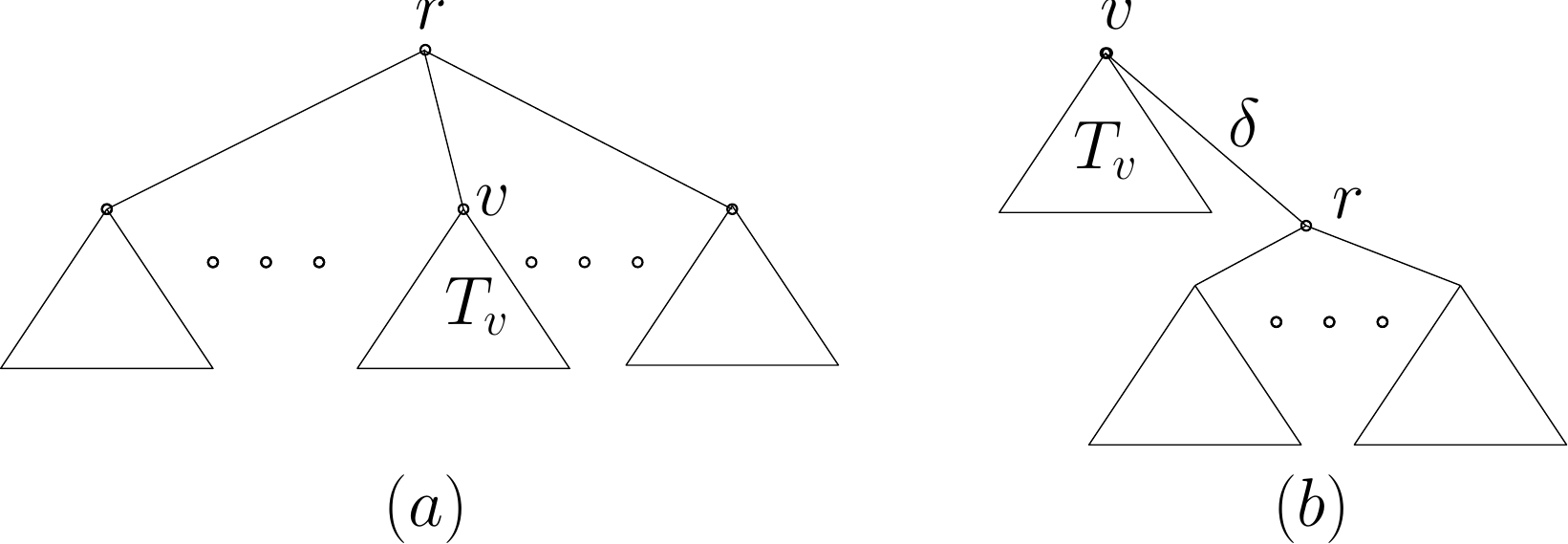}}
\begin{footnotesize}
\caption{(a) The original tree rooted at $r$. (b) Considering $v$ as the root of the tree. Computation of an optimal budget allocation for the tree rooted at $v$ can be done in constant time, given an optimal budget allocation for the tree rooted at $r$.}
\end{footnotesize}
\label{fig:4}
\end{figure}

Roughly, our algorithm will traverse the tree twice. First, we traverse the tree, computing the algorithm 
of \thmref{optimalSolutionRootedTree} for an arbitrary root $r$ (say, $r=v_1$). Recall that this algorithm 
traverses the tree in a bottom-up fashion, i.e., from the leaves to the root, and that an optimal solution for each vertex is calculated, with respect to the subtree below it. Thereafter, we traverse the tree in a top-down fashion, while for each vertex $v$ that is a child of $v'$, we compute an optimal budget radius for the tree with root $v$, given an optimal budget radius for the tree with root $v'$ and the information stored in $v$ from the first traversal. 
\old{
The suggested algorithm runs in linear time and not only computes the optimal budget radius over 
all nodes, it can also be used to when wishing to compute the optimal budget radius for each node 
in the tree as a root.
We next give a more formal description of this procedure. 
}

\begin{theorem} \label{thm:optimalSolutionUnRootedTree}
    Given a tree $T=(V,E)$ with some length function $\length$ on $E$, it is possible 
    to compute an optimal allocation for $T$, i.e., a pair $(\B^*,r^*)$, such that  
	$\WR_{\B^*} (r^*) = \BR(T)$. Furthermore, this can be done in linear time 
	in the size of $T$.
  \end{theorem}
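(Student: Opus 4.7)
The plan is to execute the two-pass scheme sketched informally above the statement. First I would fix an arbitrary vertex $r = v_1$ of $T$ and invoke the algorithm of \thmref{optimalSolutionRootedTree} on $(T, r)$. Besides producing $\BR(T, r)$ in linear time, this bottom-up recursion naturally computes, and can cache at every vertex $v$, the value $\BR(T_v, v)$ where $T_v$ is the subtree rooted at $v$ in the $r$-rooted orientation. For each non-root $v$ with parent $p$ I would additionally cache $\BR(T'_v, p)$, where $T'_v$ is $T_v$ augmented by the edge $(p,v)$; this is already computed when the recursion at $p$ invokes \lemref{centerOneChildSimpleCase} on the single-child combination of $(p,v)$ with $T_v$. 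All of this fits inside the original linear-time recursion at no extra asymptotic cost.

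Second, I would traverse $T$ in a top-down order starting from $r$ (for which $\BR(T,r)$ is already known). When visiting a vertex $v$ whose parent $p$ in the $v_1$-rooting has just been processed, the three quantities $\BR(T,p)$, $\BR(T_v,v)$ and $\BR(T'_v,p)$ are all available, so \lemref{constTimeComputeBR} yields $\BR(T, v)$ in constant time. Because $p$ lies between $v_1$ and $v$, the objects $T_v$ and $T'_v$ that \lemref{constTimeComputeBR} requires (defined relative to the temporary "reference" vertex $p$) coincide with the ones already cached in the first pass, so no recomputation is needed. Each edge of $T$ is touched only $O(1)$ times across the two traversals, and at the end the value $\BR(T,v)$ is stored at every vertex.

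With these values in hand, I would select $r^*$ to be a vertex minimizing $\BR(T, \cdot)$ by a single linear scan, and then invoke \thmref{optimalSolutionRootedTree} one more time on $(T, r^*)$ to produce the explicit allocation $\B^*$. By the definition of $\BR(T)$ and the choice of $r^*$, the pair $(\B^*, r^*)$ satisfies $\WR_{\B^*}(r^*) = \BR(T, r^*) = \BR(T)$, as required. The three linear-time stages compose to a linear-time algorithm overall.

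The main obstacle is simply ensuring that the data cached in the first pass is exactly what \lemref{constTimeComputeBR} consumes when the focus shifts from a parent $p$ to a child $v$; that is why both $\BR(T_v, v)$ (the subtree side) and $\BR(T'_v, p)$ (the hinge edge) must be stored during the bottom-up phase. Once this bookkeeping is arranged, the top-down phase becomes a mechanical, edge-by-edge application of the lemma, and the linear total running time follows immediately from the fact that every vertex participates in only constantly many constant-time operations.
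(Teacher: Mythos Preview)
Your proposal is correct and follows essentially the same two-pass scheme as the paper: a bottom-up traversal from an arbitrary root that caches $\BR(T_v,v)$ and $\BR(T'_v,p(v))$, followed by a top-down traversal that propagates $\BR(T,\cdot)$ via \lemref{constTimeComputeBR}. Your version is in fact slightly more explicit than the paper's about the bookkeeping and about the final call to \thmref{optimalSolutionRootedTree} on $(T,r^*)$ to materialize $\B^*$, but the underlying algorithm is the same.
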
  
\begin{proof}
Our algorithm traverses the tree twice. In the first pass, we set an arbitrary vertex $r$ to be the root (say, $r = v_1$) and traverse the tree in a bottom up manner, following the algorithm described in \thmref{optimalSolutionRootedTree}.  

For any vertex $v \in V$, denote by $T_v= (V_v,E_v)$ the subtree of $v$, and denote by $p(v)$ the parent of $v$. Denote by $T'_v$ the subtree of $v$ augmented by the edge $(p(v),v)$, i.e., $T'_v= (V_v\cup \set{p(v)},E_v\cup \set{(p(v),v)})$. We compute for each vertex $v$ the value of an optimal budget radius with respect to the subtree of $v$, i.e., $\BR(T_v,v)$. Recall that in order to do so, we compute for each child $u$ of $v$, not only the budget radius for $T_u$ (i.e., $\BR(T_u,u)$), but also the budget radius for the augmented subtree of $u$ (with root $v$), i.e., $\BR(T'_u,v)$. Here, we also store the two local values, $\BR(T_v,v)$ and $\BR(T_{p(v)},p(v))$, for each node $v$ we traverse.

In the second pass we traverse the tree in a top-down manner, starting at the root $r$ and moving from each vertex to all its children. For each vertex $v$, we compute the budget radius for the whole tree $T$ with root $v$ (i.e., $\BR(T,v)$). By \lemref{constTimeComputeBR}, we can do so in constant time since we have
already computed $\BR(T,p(v))$, as well as $\BR(T_v,v)$ and $\BR(T_{p(v)},p(v))$.
   The algorithm returns the pair $(\B^*,r^*)$, for which, the budget radius is
   minimal. 
\end{proof}

\section{Generalization to Median Point}
In this section we generalize the center point algorithms to the {\em median point}.
In this case we would like to find the node ($M$) and its corresponding budget allocation which minimizes the average (or sum) weight of all shortest path from $M$ to all the graph nodes. 
Following the same framework as in the center point we now would like to find the optimal allocation for $\B^*$, for a fixed node $r$ as a median point.

\begin{figure}[htb] 
\centering{\includegraphics[scale=0.6]{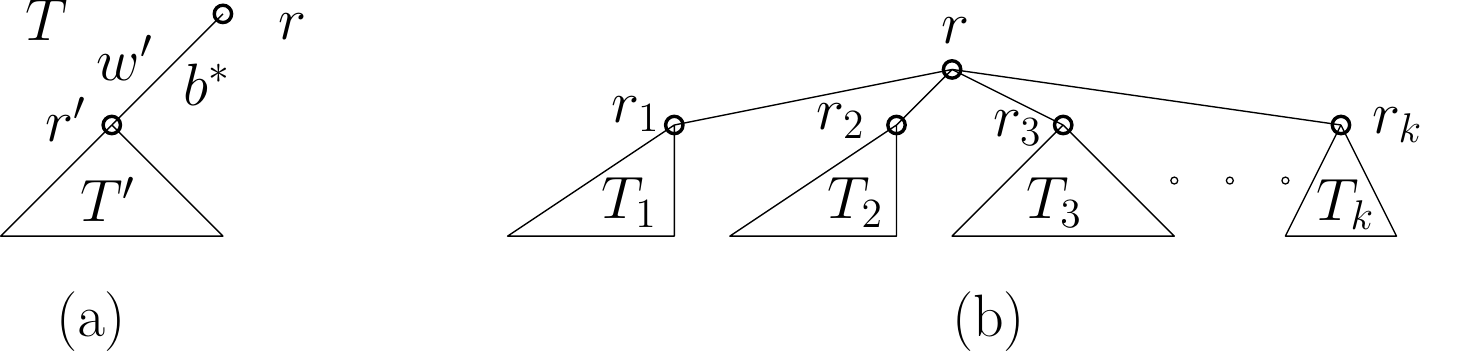}}
\begin{footnotesize}
\caption{Following the same framework as in the center point we now would like to find the optimal allocation for $\B^*$.
(a) Single subtree case: $b^*$ minimizes the value of: $\frac{sum(T',r')}{1-b} + n\frac{w'}{b}$, where $n=|T'|$.
(b) Multiple subtrees case: can be solved as an $LP$ minimization problem.}
\label{fig:6}
\end{footnotesize}
\end{figure}

In case there is a single child to the root $r$ (see Figure \ref{fig:6}$a$) we need to minimize the following term: $sum(T,r) = \frac{sum(T',r')}{1-b^*} + n\frac{w'}{b^*}$, where $n=|T'|$.
Observe that $sum(T,r) = \frac{c_1+\frac{c_2}{x}}{1-x}$ where $c_1 = sum(T',r'), c_2=nw'$ therefore $\B^*$ can be computed in constant time, and so is $sum(T,r)$.

If the root $r$ has more than a single subtree (as shown in Figure \ref{fig:6}$b$), we can apply the above computation on each subtree independently, calculating the optimal budget allocation for each edge $e_i=(r_i,r)$, then the budget between all subtrees can be normalized using the following minimization problem: 
$min(\frac{X_1}{B_1}+\frac{X_2}{B_2}+...+\frac{X_k}{B_k})$ where $B_i>0$, $X_i$ is the optimal sum of $T_i$ with the root $r$, and $B_1+B_2+...+B_k=1$. 
This problem can be solved in $O(k)$ time.
\begin{figure}[bth]
\centering{\includegraphics[scale=1.0]{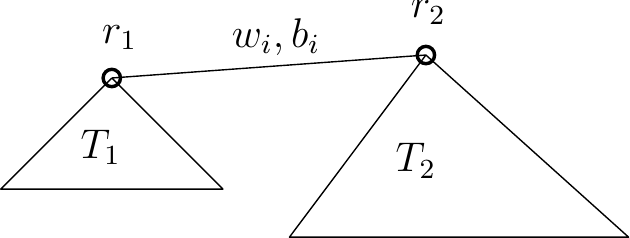}}
\begin{footnotesize}
\caption{Assuming $|T_1| < |T_2|$ the median cannot be in $T_1$, regardless of $w_i$ and $b_i$. }
\label{fig:7}
\end{footnotesize}
\end{figure}

\begin {coro} The location of the median in the unrooted tree case depends only on the tree structure,
i.e., it is in the regular unweighted tree median node. This is due to the convex nature of the median see figure \ref{fig:7}.
Therefore, the optimal root for the median point can be found in linear time.
\end{coro}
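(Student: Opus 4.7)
The plan is to recast the median objective as a sum over edges, then reduce the unrooted problem to locating the centroid of the underlying tree. First, I would observe that in any tree, the weighted distance sum telescopes as
\[
\WM_{\B}(m) \;=\; \sum_{e \in E} n_e(m)\cdot \weight_{\B}(e) \;=\; \sum_{e \in E} n_e(m)\cdot \frac{\length(e)}{\B(e)},
\]
where $n_e(m)$ denotes the number of vertices lying in the component of $T\setminus\{e\}$ not containing $m$. This is simply the observation that each path from $m$ contributes $\weight_{\B}(e)$ to the sum exactly once per vertex on the far side of $e$.

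Next, I would minimize this expression over valid allocations $\B$ with $\sum_e \B(e) = 1$. This is a classical Cauchy--Schwarz optimization: setting $a_e \eqndef \sqrt{\length(e)\cdot n_e(m)}$, we have $\sum_e \frac{a_e^2}{\B(e)}\cdot \sum_e \B(e) \ge \bigl(\sum_e a_e\bigr)^2$ with equality exactly when $\B(e)\propto a_e$. This yields the closed form
\[
\BM(T,m) \;=\; \left(\sum_{e \in E} \sqrt{\length(e)\cdot n_e(m)}\right)^{2}.
\]

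The key comparative step is then the following: for two adjacent vertices $m$ and $v$ joined by an edge $e^*$, the quantity $n_e(\cdot)$ coincides for $m$ and $v$ on every edge $e\neq e^*$, because any such edge has $m$ and $v$ lying on the same side of it. Only the term for $e^*$ differs: $n_{e^*}(m)=|T_v|$ and $n_{e^*}(v)=n-|T_v|$, where $T_v$ is the subtree hanging off $v$ once $m$ is removed. Hence $\BM(T,m)\le \BM(T,v)$ iff $|T_v|\le n-|T_v|$, and this comparison is independent of $\length(e^*)$ and of every other parameter of the instance. This is exactly the ``convex nature'' alluded to in Figure~\ref{fig:7}.

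From this monotonicity along edges I would conclude that the optimal root $m^*$ is precisely a vertex such that every component of $T\setminus\{m^*\}$ has at most $n/2$ vertices --- i.e., a \emph{centroid} of $T$, a purely combinatorial object determined only by the tree structure. The centroid can be located in linear time by a single DFS that records subtree sizes, and the corresponding optimal allocation is then read off from the Cauchy--Schwarz formula above. The only step requiring care, and the main place a subtle bug could hide, is the edge-accounting identity in the first paragraph: one must ensure that every vertex-to-$m$ path is counted on each of its edges exactly once, which relies on the uniqueness of paths in a tree. Once that identity is in hand, the rest of the argument is essentially mechanical.
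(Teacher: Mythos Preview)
Your argument is correct, and in fact considerably more explicit than what the paper offers. The paper does not really prove this corollary: it simply asserts it, pointing to Figure~\ref{fig:7} and the phrase ``convex nature of the median,'' after having set up a recursive computation of $\BM(T,m)$ via the single-child and multiple-children cases. Your route is genuinely different: you bypass the recursion entirely by deriving the closed form
\[
\BM(T,m)=\Bigl(\sum_{e\in E}\sqrt{\length(e)\,n_e(m)}\Bigr)^{2}
\]
directly from Cauchy--Schwarz. This closed form is what makes the neighbor comparison so clean: all terms except the one for $e^\ast=(m,v)$ are literally identical for $m$ and $v$, so the inequality $\BM(T,m)\le\BM(T,v)$ reduces to $|T_v|\le n-|T_v|$, which is exactly the unweighted centroid condition and is visibly independent of $\length$. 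The paper's recursive formulation would have to work harder to isolate this comparison, and it never actually does so. Your version also makes the ``convexity'' precise: from any non-centroid vertex, moving toward the heavy subtree strictly decreases $\BM$, so every local minimum is global. The only point worth stating a little more carefully is the global conclusion---you should say explicitly that if $m$ is not a centroid then some neighbor $v$ has $|T_v|>n/2$, hence $\BM(T,v)<\BM(T,m)$, so the minimum can only occur at a centroid; this is implicit in your last paragraph but deserves one clear sentence.
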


\section {Budget Radius -- The General Case}
In this section we consider the general case problem of optimizing the budget radius 
for a complete graph over $n$ vertices, induced by some metric space $M = (V,d)$. 
We present a hardness proof showing that the general case of the budget radius problem is N.P. hard.
We then present an $O(\log^2(n))$ approximation algorithm for this problem.
In order to motivate the non-triviality of a logarithmic approximation factor, 
We start by showing that a naive Minimum Spanning Tree ($\MST$) heuristic may lead to an $O(n^{0.5}$) approximation factor.
Assume we have $n$ points on a square uniform grid. Its $MST$ may have a path like shape, with $\Omega(n)$ radius. 
Hence its budget radius is $\Omega(n^2)$. On the other hand, each of the $n$ points may be connected to the center with a 
path of length $O(n^{0.5})$. 
Hence, the budget radius of this metric is $O(n^{1.5})$.

\begin{figure}[bth] 
\centering{\includegraphics[scale=0.5]{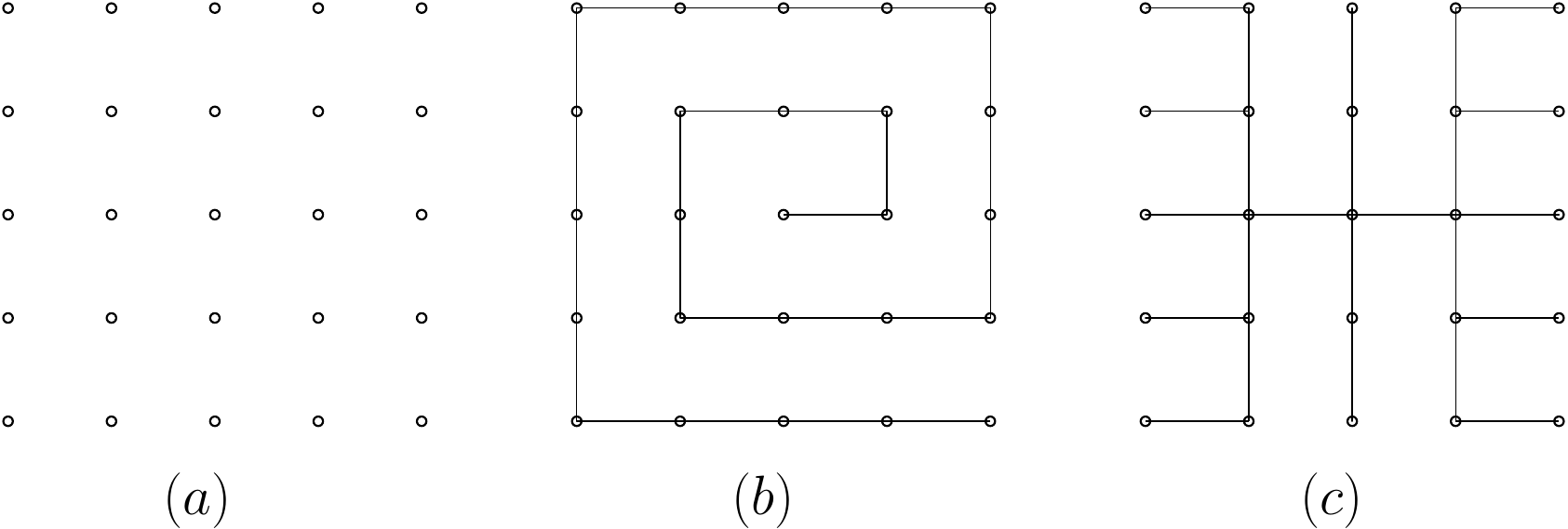}}
\begin{footnotesize}
\caption{Using an $\MST$-like heuristic may lead to an $O(n^{0.5})$ approximation ratio with respect to the {\em center point} problem (minimum radius). ($a$) A grid based set of points. ($b$) A path-like $\MST$. 
($c$) A solution with radius $O(n^{0.5})$.
}
\end{footnotesize}
\label{fig:5}
\end{figure}  

\subsection{Hardness results}
In this section we present a reduction showing that finding the optimal budget allocation 
for radius budget problem on rooted graphs is N.P. hard problem. 
Let $G =(V,E)$ be an undirected graph induced by some length function $\ell(e)$ for each $e \in E$. Let $B$ 
be a positive budget value. Allocating a budget $B(e)$ to edge $e \in E$ with length $\ell(e)$ implies that 
the resulting weight of $e$ is $w(e) = \frac{\ell(e)}{B(e)}$. Let $(G, d_w)$ denote graph $G$ equipped with
a distance function $d_w(\cdot,\cdot)$, defined as the the minimum path distance under
weight function $w(\cdot)$. The rooted budget radius problem is defined as follows: 
Given vertex $r \in V$, allocate budget $B$ among the edges of $E$ (i.e., $\sum_{e \in E}B(e) = B$) so that 
the set radius $\max_{v \in V} d_w(r,v)$ is minimized.
We prove the following theorem:

\begin{theorem}\label{thm:hardness}
The decision version of the rooted budget radius problem is NP-hard.
\end{theorem}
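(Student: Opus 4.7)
The plan is a polynomial-time reduction from a classical NP-hard covering problem such as Set Cover (restricted to instances where $k \le m/4$, which remains NP-hard). Given an instance $(U = \{e_1, \ldots, e_n\}, \mathcal{S} = \{S_1, \ldots, S_m\}, k)$, I would construct an edge-weighted graph $G$ on vertices $\{r\} \cup \{s_1, \ldots, s_m\} \cup \{e_1, \ldots, e_n\} \cup W$, where $W$ is an auxiliary set of ``tail'' vertices described below. The core edges are $(r, s_i)$ of length $L$ (with $L$ a large polynomial in $n, m$, say $L = 64 m n$) and $(s_i, e_j)$ of length $1$ whenever $e_j \in S_i$; in addition, for each $s_i$ I would attach a \emph{tail}, consisting of a path of $t = \lceil \sqrt{L} \rceil$ unit-length edges (whose vertices populate $W$). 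The decision threshold is $K = 4Lk + 2L$.

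By \lemref{optimalallocationIsTree}, any optimal allocation corresponds to a spanning tree $T$ of $G$ rooted at $r$. Since $G$ has no direct $(r, e_j)$-edge, the children of $r$ form a subset $D \subseteq \{s_1, \ldots, s_m\}$. Because every $s_i$ must support its tail (all vertices are spanned), Corollary~\ref{cor:calculationMultipleChildren} combined with \lemref{centerOneChildSimpleCase} ensures $\BR(T_{s_i}, s_i) \ge t^2 = L$, and so $\BR(T'_{s_i}, r) = (\sqrt{L} + \sqrt{\BR(T_{s_i}, s_i)})^2 \ge 4L$ for each $s_i$ that is a child of $r$. Thus $\BR(T, r) \ge 4L |D|$, and a canonical-form computation (where $D$ is an actual set cover, with each $e_j$ as a direct child of some $s_i \in D$ with $e_j \in S_i$) yields $\BR(T, r) \le 4L|D| + O(n)$.

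If a set cover of size $\le k$ exists, picking $D$ to be this cover yields $\BR(T, r) \le 4Lk + O(n) < 4Lk + 2L = K$ for $L$ large enough. Conversely, if no cover of size $\le k$ exists, then in canonical form $|D| \ge k+1$ and $\BR(T, r) \ge 4L(k+1) > K$; alternatively, any spanning tree with $|D| = 1$ (other $s_i$'s embedded as deeper descendants via shared elements) still has $\BR(T, r) \ge L m > K$ whenever $k < m/4$. Either way, $\BR(T, r) > K$ when no cover of size $\le k$ exists, establishing the desired equivalence and thereby the NP-hardness of the decision problem.

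The main obstacle will be controlling \emph{all} alternative spanning-tree structures, in particular multi-level trees where several $s_i$'s appear deep in the tree via chains of shared elements and thereby attempt to share the costs imposed by their tails. The tail gadget is the key trick: because each $s_i$ is forced to contribute $\Theta(L)$ to the $\BR$ of whichever subtree it inhabits (and this contribution only grows under the $(\sqrt{a} + \sqrt{b})^2$ recurrence as one propagates toward $r$), deep spanning trees are penalized quadratically. Verifying this rigorously—in particular, ruling out configurations where multiple $s_i$-tails are ``compressed'' by being clustered at a common ancestor—is the bulk of the remaining work and the real technical difficulty of the proof.
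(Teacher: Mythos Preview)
Your proposal has a real gap in the YES-case upper bound. In your ``canonical form'' you take $D$ to be a set cover of size $k$ and claim $\BR(T,r)\le 4Lk+O(n)$, but this ignores the $m-k$ set vertices $s_{i'}$ with $i'\notin D$. Each such $s_{i'}$ must still appear in the spanning tree, and since its tail is reachable only through $s_{i'}$, that tail hangs below $s_{i'}$ and forces $\BR(T_{s_{i'}},s_{i'})\ge t^2\approx L$. Propagating this through the recursion of \lemref{centerOneChildSimpleCase} and Corollary~\ref{cor:calculationMultipleChildren} (e.g.\ when the unused $s_{i'}$'s are hung below element vertices, themselves children of some $s_i\in D$) gives $\BR(T,r)\ge L\sum_{i\in D}(1+\sqrt{1+h_i})^2\ge L(m+k)$, where $h_i$ counts the $s_{i'}$'s hosted under $s_i$. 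In your regime $k\le m/4$ this already exceeds $K=4Lk+2L$, so the reduction fails to certify YES instances. The difficulty you flag at the end---controlling all alternative tree shapes---is thus already fatal for your intended ``good'' shape; the tails, which you introduced to penalize deep trees, penalize the canonical tree just as much.

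The paper avoids this with a different gadget. It reduces from $3$-Set Cover and, for each set $S_j=\{e_a,e_b,e_c\}$, creates \emph{seven} set-nodes (one for each nonempty subset of $S_j$), each joined to $r$ by a unit edge and to the corresponding element-nodes by edges whose lengths $x,y,z$ are calibrated via the star formula of \lemref{radius}. One then argues that in any optimum every set-node is a direct child of $r$ and every element-node has exactly one positive-budget incident edge, so the optimal solution is forced to be a union of small stars meeting at $r$---there is no analogue of your unplaced-$s_{i'}$ problem. The calibration makes the minimum budget for radius $1$ equal exactly $7|S|+2|E|$ plus the number of sets that cover at least one element, so minimizing budget is equivalent to minimizing the cover.
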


Before proving Theorem \ref{thm:hardness}, we require a preliminary lemma and technical observation:

\begin{lemma}\label{lem:radius}
Let $G=(V,E)$ be a star graph with $V=\{r,s,v_1,\ldots,v_k\}$.
Let $s$ be the center of the star, and let edge
$e=(r,s)$ have length $\ell(e)=1$
and the $k$ edges 
$e_i = (s,v_i)$ 
(for $i=1,\ldots,k$)
have length $\ell(e_i)=x$.
Then in the optimal solution to the rooted budget radius problem on $G$ 
the radius equals 
$$\frac{(1+\sqrt{xk})^2}{B}$$
\end{lemma}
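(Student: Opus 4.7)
The plan is to recognize the graph as a tree and directly invoke the machinery developed earlier for rooted trees, together with a simple scaling argument to handle an arbitrary budget $B$ (the earlier lemmas assume total budget $1$).

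First I would observe that $G$ rooted at $r$ is a tree in which $r$ has a single child $s$, and $s$ in turn has $k$ children $v_1,\ldots,v_k$ that are leaves, with $\ell(r,s)=1$ and $\ell(s,v_i)=x$ for each $i$. By \lemref{optimalallocationIsTree}, any optimal allocation puts positive budget only on a spanning subtree, and since $G$ itself is already a tree the whole allocation lives on these $k+1$ edges. I would work first with total budget $1$ and afterwards rescale.

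Next, I would compute the budget radius of the subtree $T_s$ rooted at $s$. Since each of the $k$ edges $(s,v_i)$ forms a single-edge subtree of length $x$, its budget radius (with a dedicated unit budget) is $x$. By \corref{calculationMultipleChildren} applied at $s$, we get $\BR(T_s,s)=\sum_{i=1}^{k}x=kx$. Now I would apply \lemref{centerOneChildSimpleCase} to the whole tree $T$ rooted at $r$ with $q=\ell(r,s)=1$ and $R'=\BR(T_s,s)=kx$. This yields
\[
\beta=\frac{\sqrt{q}}{\sqrt{R'}+\sqrt{q}}=\frac{1}{1+\sqrt{kx}},
\qquad
\BR(T,r)=\frac{q}{\beta}+\frac{R'}{1-\beta}.
\]
A direct calculation gives $q/\beta=1+\sqrt{kx}$ and $R'/(1-\beta)=kx+\sqrt{kx}$, so their sum collapses to $1+2\sqrt{kx}+kx=(1+\sqrt{kx})^2$. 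This is the optimal radius in the unit-budget setting.

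Finally, I would handle the scaling to an arbitrary budget $B$. If an allocation $\B$ with $\sum_e\B(e)=1$ attains radius $R$, then $\B'(e):=B\cdot\B(e)$ is an allocation with $\sum_e\B'(e)=B$ whose weights are $\ell(e)/\B'(e)=w_{\B}(e)/B$, so its radius is $R/B$; conversely any allocation of budget $B$ rescales to a unit-budget allocation with radius $B$ times larger. Thus the optimal value scales by $1/B$, giving the claimed expression $(1+\sqrt{xk})^2/B$. I do not foresee a real obstacle here; the only place where one must be slightly careful is confirming that the obvious bijection between budget-$1$ and budget-$B$ allocations is radius-preserving up to the factor $1/B$, which is immediate from $w(e)=\ell(e)/\B(e)$.
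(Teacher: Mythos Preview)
Your proof is correct but takes a different route from the paper. The paper gives a direct, self-contained optimization: it first argues by symmetry that in an optimal allocation all $k$ leaf edges receive the same budget, then sets $y=\B(e)$ so that each $\B(e_i)=(B-y)/k$, minimizes $\frac{1}{y}+\frac{xk}{B-y}$ via calculus to obtain $y=\frac{B}{1+\sqrt{xk}}$, and plugs back in. Your argument instead recognizes the star as a rooted tree and feeds it through the machinery of \corref{calculationMultipleChildren} and \lemref{centerOneChildSimpleCase}, then rescales from unit budget to budget $B$. Both are valid; your version is shorter and highlights that this lemma is just a special case of the rooted-tree theory already developed, while the paper's version is self-contained (useful since the lemma sits in the hardness section and a reader may not want to chase back to Section~3) and handles the general budget $B$ directly without a separate scaling step. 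One small remark: your appeal to \lemref{optimalallocationIsTree} is unnecessary here, since $G$ is already a tree and there is nothing to prune.
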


\begin{proof}
First note that in the optimal solution, the path from $r$ to each $v_i$
is equal to $w(e)+w(e_i)$. Hence, it must be that all $k$ allocations 
$B(e_i)$ are equal, as otherwise not all $w(e_i)$ are equal, and then 
some weight could be taken from the largest
allocation $B(e_j)$ (for some $j \in [1,k]$) and distributed evenly among 
all allocations $B(e_i)$, resulting in a shorter radius.

Set $y = B(e)$, and it must be that $B(e_i) = \frac{B-y}{k}$.
It follows that the optimal solution minimizes
$$w(e) + w(e_i) = \frac{1}{y} + \frac{x}{(B-y)/k}.$$
This term is minimized when
$\frac{1}{y^2} = \frac{xk}{(B-y)^2}$,
that is
$y=\frac{B}{1+\sqrt{xk}}$.
So 
$B(e) 
= y 
= \frac{B}{1+\sqrt{xk}}
$
and 
$w(e) 
= \frac{1}{b(e)} 
= \frac{1+\sqrt{xk}}{B}$,
while
$B(e_i) 
= \frac{B-y}{k}
= \frac{B}{k} \cdot \frac{\sqrt{xk}}{1+\sqrt{xk}}
$
and
$w(e_i) 
= \frac{x}{b(e_i)}
= \frac{\sqrt{xk}(1+\sqrt{xk})}{B} 
$.
The radius is 
$w(e) + w(e_i) = 
\frac{(1+\sqrt{xk})^2}{B}
$.
\end{proof}

\begin{observation}
Consider Figure \ref{fig:Hardness1}, The optimal allocations for the left tree (a) and the right tree (b) are not the same, 
the right tree has a smaller budget radius. In fact, the optimal allocation of tree (a) can be used
on the right tree (b) leading to the same radius (allocating to $b_3=b_4=b_1$ and $b_5=b_6=b_2$). The optimal budget allocation
for both trees is presented in figure \ref{fig:1} cases:$D,E$).

\begin{figure}[bth]
\centering{\includegraphics[scale=0.90]{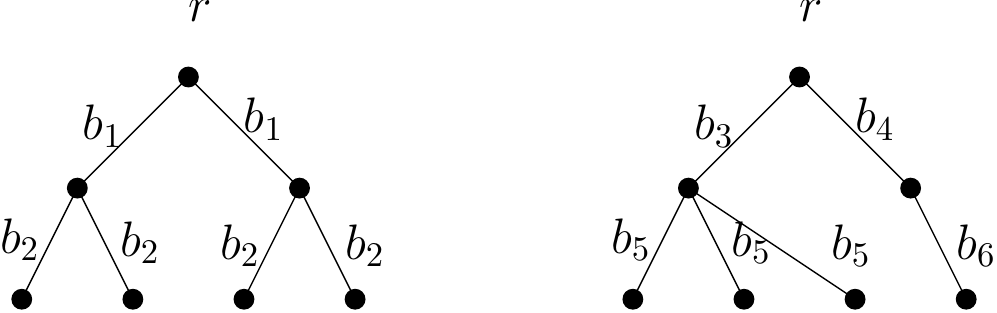}}
\begin{footnotesize}
\caption{The radius budget of the right tree (b) is smaller than the left tree (a).}
\label{fig:Hardness1}
\end{footnotesize}
\end{figure}

\end{observation}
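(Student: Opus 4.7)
The plan is to split the observation into its two substantive claims and verify each by appealing to the case analysis already carried out in Figure~\ref{fig:1}.

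First I would identify the structural correspondence between the two trees in Figure~\ref{fig:Hardness1} and the cases catalogued in Figure~\ref{fig:1}. Tree~(a) is exactly two copies of case~B joined at a common root $r$, so by \corref{calculationMultipleChildren} its optimal budget radius is twice that of case~B; that number was already computed to be roughly $11.656$, i.e.\ it is precisely case~D. Tree~(b) is instead the disjoint union (at $r$) of case~A and case~C, so again by \corref{calculationMultipleChildren} its optimal budget radius is the sum of those two radii, namely case~E, computed to be roughly $11.464$. Strict inequality $\BR(\text{tree (b)}) < \BR(\text{tree (a)})$ then follows from comparing these two explicit numbers, and in particular the optimal allocations must differ (since they achieve different radii on different edge sets).

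Second, I would justify the ``same radius under a copied allocation'' claim. The optimal allocation of tree~(a) assigns budget $b_1$ to each of the two ``inner'' edges and budget $b_2$ to each of the two symmetric pairs of ``outer'' edges, grouped by the two case-B halves. On tree~(b), the four designated edges $b_3,b_4,b_5,b_6$ include two pairs that play exactly the structural role of the case-B outer edges. Setting $b_3 = b_4 = b_1$ and $b_5 = b_6 = b_2$ (and matching the remaining edges one-to-one) produces an allocation on tree~(b) whose edge sum is identical to the edge sum on tree~(a), hence valid (the total is $1$). Moreover, by construction every root-to-leaf path in tree~(b) under this allocation has the same weighted length as the corresponding root-to-leaf path in tree~(a), so the weighted radius is preserved. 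This certifies $\BR(\text{tree (b)}) \le \WR(\text{tree (b)}, \text{copied allocation}) = \BR(\text{tree (a)})$, and combined with the numerical comparison above the inequality is strict.

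The main obstacle is bookkeeping the edge correspondence cleanly enough that the copied allocation genuinely sums to $1$ and genuinely preserves each root-to-leaf distance; once the structural identifications tree~(a)~$=$~case~D and tree~(b)~$=$~case~E are in place, both claims reduce to the arithmetic already displayed in Figure~\ref{fig:1}, so no new optimization is required.
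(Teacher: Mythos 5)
Your proposal is correct and takes essentially the same route as the paper, which justifies the observation purely by identifying tree~(a) with case~$D$ (two copies of case~$B$, radius $\approx 11.656$) and tree~(b) with case~$E$ (case~$A$ plus case~$C$, radius $\approx 11.464$) of Figure~\ref{fig:1} and invoking the additive decomposition of Corollary~\ref{cor:calculationMultipleChildren}; your additional verification that the copied allocation $b_3=b_4=b_1$, $b_5=b_6=b_2$ sums to $1$ and preserves every root-to-leaf weight merely makes explicit what the paper leaves implicit.
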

We now return to the proof of Theorem \ref{thm:hardness}:

\begin{proof}
The theorem is proved via reduction to $3$-Set Cover, the version of Set Cover where the
cardinality of each set is at most $3$. $3$-Set Cover is NP-hard. In the reduction, we
will assume that in the minimum radius problem the target radius is fixed and the goal
is to find the minimum cost $B$ -- by scaling, this is equivalent to the scenario where $B$
is fixed as the goal is to minimize the radius.

Given a set cover instance $(S,E)$ of sets and elements, we create a graph $G$ as follows.
First create root node $n_r$.
For each element $e_i$ in the set cover instance, create a single node $n_{e_i}$.
We call these nodes {\em element-nodes}.
For each $3$-Set $S_j = \{e_a,e_b,e_c\}$, we create seven nodes called {\em set-nodes}.
Set 
$x=1$,
$y= \frac{(\sqrt{6}-1)^2}{2}$
and
$z= \frac{(\sqrt{8}-1)^2}{3}$:
\begin{enumerate}
\item
A new set-node $n_{S_j,e_a}$ is connected to root $n_r$ by an edge of length 1, and to
element-node $n_{e_a}$ by an edge of length $x$. New set-nodes $n_{S_j,e_b}$ and 
$n_{S_j,e_c}$ are similarly connected to $n_r$, and respectively to element-nodes $n_{e_b}$ 
and $n_{e_b}$. These set-nodes will represent the solution case where set $S_j$ is needed to 
cover only a single element.
\item
A new set-node $n_{S_j,e_a,e_b}$ is connected to root $n_r$ by an edge of length 1, and to
element-nodes $n_{e_a}$ and $n_{e_b}$ by edges of length $y$. New set-nodes 
$n_{S_j,e_a,e_c}$ and $n_{S_j,e_b,e_c}$ are similarly connected to $n_r$, and respectively to 
element-node pairs $n_{e_a}$ and $n_{e_c}$ or $n_{e_b}$ and $n_{e_c}$. 
These set-nodes will represent the solution case where set $S_j$ is needed to cover
only two elements.
\item
A new set-node $n_{S_j,e_a,e_b,e_c}$ is connected to root $n_r$ by an edge of length 1, and
to each of element-nodes $n_{e_a},n_{e_b},n_{e_c}$ by an edge of length $z$.
This set-node represents the solution case where set $S_j$ is needed to cover all three 
elements.
\end{enumerate}
Let the set of set-nodes for $S_j$ be $N_j$.

Let the radius in the optimal solution be 1, and we will show that finding the minimum 
cost $B$ achieving this radius on $G$ is equivalent to solving set cover on the input 
instance. 

In the optimal solution, for each element-node $n_{e_a}$ only one edge incident on 
$n_{e_a}$ is assigned non-zero weight: At least one edge must be assigned weight, or 
else the path from the root $n_r$ to $n_{e_a}$ is infinite. Further, if the optimal
solution assigns weight to two edges incident on $n_{e_a}$, then it must be that 
one of the incident set-nodes has its minimum path to $n_r$ routed through $n_{e_a}$.
But this forms a contradiction: The edge connecting this set-node to $n_r$ has length
$1 = x < y < z$, so a shorter path can be attained by removing the weight
from the edge connecting $n_{e_a}$ to the set-node, and placing it on the edge 
connecting the set-node to $n_r$. 

It follows that in the optimal solution, the only finite paths are those consisting of 
a single edge connecting the root to a set-node, and the edges connecting the set-node
to at most three element-nodes. So the the set of finite edges form a set of star graphs 
connected at $n_r$.

Now consider a set $N_j$ of set-nodes for set $S_j$, and all edges incident on this set.
Suppose there are exactly $c$ element-nodes whose minimum paths are routed through set-nodes 
of $N_j$. Clearly $0 \le c \le 3$.
\begin{enumerate}
\item
If $c=0$, then weight is assigned only to the length $1$ edges connecting the seven set-nodes 
of $N_j$ directly to the root, at a total cost of $7$. 
\item
If $c=1$, then since $x<y<z$, weight is assigned to the length $x$ edge connecting the 
relevant element-node $n_{e_a}$ to set-node $n_{S_j,e_a}$. Consider the star graph formed by 
nodes $n_r,n_{S_j,e_a},n_{e_a}$: By Lemma \ref{lem:radius}, to achieve a path length of 1 the 
total weight assigned to the two graph edges is $(1+\sqrt{x})^2=4$. The other six edges 
connecting $n_r$ to $N_j$ are each assigned weight $1$, for a total cost of 
$6+(1+\sqrt{x})^2 = 10$.
\item
If $c=2$, the optimal solution will assign weight to the length $y$ edges connecting the relevant
element-nodes $n_{e_a},n_{e_b}$ to set-node $n_{S_j,e_a,e_b}$. Consider the star graph 
formed by nodes $n_r,n_{S_j,e_a,e_b},n_{e_a},n_{e_b}$: By Lemma \ref{lem:radius}, to achieve 
path lengths of 1 the total weight assigned to the three graph edges is $(1+\sqrt{2y})^2=6$. 
The other six edges connecting $n_r$ to $N_j$ are each assigned weight $1$, for a total cost 
of $6+(1+\sqrt{2y})^2 = 12$.

Another option would have been to assign weight to the two edges connecting 
$n_{e_a}$ to $n_{S_j,e_a}$ and $n_{e_b}$ to $n_{S_j,e_b}$. In this case, the total cost 
is $5+2(1+\sqrt{x})^2 = 13 > 12$, so this assignment is suboptimal.
\item
If $c=3$, the optimal solution will assign weight to the length $z$ edges connecting
element-nodes $n_{e_a},n_{e_b},n_{e_c}$ to set-node $n_{S_j,e_a,e_c}$. Consider the star 
graph formed by nodes $n_r,n_{S_j,e_a,e_b,e_c},n_{e_a},n_{e_b},n_{e_c}$: By Lemma 
\ref{lem:radius}, to achieve path lengths of 1 the total weight assigned to the three graph 
edges is $(1+\sqrt{3z})^2=8$. The other six edges connecting
$n_r$ to $N_j$ are each assigned weight $1$, for a cost of 
$6+(1+\sqrt{3z})^2=14$.

Another option would have been to assign weight to three edges connecting 
$n_{e_a}$ to $n_{S_j,e_a}$, $n_{e_b}$ to $n_{S_j,e_b}$, and $n_{e_c}$ to $n_{S_j,e_c}$. 
In this case, the total cost is $4+3(1+\sqrt{x})^2 = 16> 14$, so this 
assignment is suboptimal.
If we would assign weight to the edges connecting $n_{e_a}$ to $n_{S_j,e_a}$ and both 
$n_{e_b},n_{e_c}$ to set-node $n_{S_j,e_b,e_c}$, then the total weight would be 
$5 + (1+\sqrt{x})^2 + (1+\sqrt{2y})^2 = 15 > 14$, so this assignment is suboptimal.
\end{enumerate}

Recall that in the optimal solution, each element-node is incident on a single edge
with non-zero weight, and this edge connects the element-node to a set-node. We will
say that the set-node {\em covers} this element node. As above, for a given set $N_j$, 
at most one set-node in $N_j$ covers some element-nodes. In this case we say that $N_j$ 
is {\em used}, and that $N_j$ covers these element-nodes. 

We may then view the above cost assignment as follows:
For each set $N_j$, if the $N_j$ is not used we pay a single overhead cost of 7, 
and if $N_j$ is used we pay a overhead cost of 8 plus a cost of 2 for each
element-node covered by $N_j$. Then the total cost for the graph is
$7|S|+2|E|$ plus the number of used sets, and the minimum cost is achieved
by using the minimum number of set-nodes to cover all element-nodes.
This is equivalent to solving the Set Cover problem on the input.
\end{proof}

\subsection{Approximation Algorithm - General Case of Budget Radius}
\subsubsection{The Special Case of a Line}
  We first consider a setup in which $M$ is defined by some $n$ points all 
  residing on the interval $[0,1]$, where for any two points $p_1,p_2$ within
  this interval, $d(p_1,p_2)$ is the Euclidean distance between $p_1$ and $p_2$. Let $G =(V,E)$ be the complete graph 
  induced by $M$. We present a valid budget allocation $\B$ to $E$ with 
  budget radius at most $\log^2 n$ and such that the graph induced by 
  $\{e: \B(e)>0\}$ is a tree spanning $V$. 

\begin {lemma} \label{lem:completeGraphOnLineApproximation}
	Let $G = (V,E)$ be the complete graph described above, then $\BR(G) \le \log^2 n$.
\end{lemma}
\begin {proof}
Let $P=\{p_1, p_2,...,p_n\}$ be a set of $n$ points on the interval $[0,1]$ (in increasing order).
Next, we construct a full binary tree $T$ over $P$. Its root is $p_{\frac{n}{2}}$,
and the root's children are $p_{\frac{n}{4}}$ and $p_{{\frac{3}{4}} n}$, etc.

  The set of possible solutions for the budget radius 
  problem for $T$ is a subset of the set of possible solutions for the budget radius 
  problem for $G$. Thus, it suffices to present a solution for $T$ 
  with budget radius at most 
  $\log^2(n)$, that is, a pair of the form $(\B,r')$, where $\B$ is a 
  valid budget allocation to $E_T$ and $r'\in V_t$, such that, 
  $\WR_{\B} (r') \le \log^2 (n)$. Clearly, fixing $r' = r$ only further restricts
  the set of solutions we allow. We next describe one such solution. 
				
  We first divide the edges of $T$ into sets, defined by the depth of a given edge 
  from the root $r$. Formally, if $e$ is an edge in $T$, we say that $e$ is at level 
  $i$ in $T$ if one of its vertices has depth $i$ and the other has depth $i+1$. 
  We denote the set of all edges in $T$ of level $i$ by $S_i$. For every edge 
  $e \in S_i$ we set $\B(e) = \alpha_e$, where 
  $\alpha_e \eqndef \frac{1}{\log (n)} \frac{\length(e)}{\sum_{e' \in {S_i}}{\length(e')}}$.
  We first need to show that this allocation is valid and sums up to at most $1$. 
  This is true since for every level $i$, we divide a $\frac{1}{\log (n)}$ fraction 
  of the budget among the edges in $S_i$. Since there are no more than $\log (n)$ 
  levels, we do not exceed our budget. 
   
  To bound the budget radius of $T$ under this allocation, observe that as $T$ 
  is a search tree, it holds for every $i$ that $\sum_{e' \in {S_i}}{\length(e')} \le 1$. 
  Thus, for every edge $e$ of $T$, we have $\alpha_e \geq \frac{\length(e)}{\log (n)}$. Now, 
  let $P$ be a simple path from $r$ to some leaf $\ell$. The weighted 
  length of $P$ (the weighted distance between $r$ and $\ell$) is
  $\sum_{e \in P}{\weight_{\B}(e)} = \sum_{e \in {P}}{\frac{\length(e)}{\alpha_e}} 
  \le \sum_{e \in {P}}{\log (n)}$, which is at most $\log^2 (n)$ since $P$ consists of at most $\log (n)$ edges.
\end{proof}

\subsubsection{General Complete (Metric) Graphs}
We next define an approximation algorithm $\A$, such that given a 
complete graph $G = (V,E)$, induced by some metric space $M = (V,d)$,
approximates the Budget Radius problem for $G$ by a factor of $O(\log^2 n)$.  
Assume that a minimum spanning tree for $G$ has a total weight $\LB$, we proceed as follows:
\old{
1. Find an Hamiltonian path ($\HP$) visiting all nodes with weight no more than $2 \cdot \LB$.\\
2. Let $G'$ be the result of unfolding $\HP$ to a straight line, i.e., $G'$ is defined by $n$ points, situated on an interval, such that, the distance between every two points is the length of the path between them on the Hamiltonian path $\HP$ (specifically, the length of the whole interval is exactly the length of $\HP$).\\
3. 	Scale the above ($\HP$) interval length to $1$. \\
4. 	Build a balanced binary search tree ($\BT$) over $G'$.\\ 
5. 	Apply the algorithm of \thmref{optimalSolutionRootedTree} to $\BT$. Assign the appropriate budget to all edges in $\BT$ and $0$ to all other edges in $E$.
}
\begin{enumerate}
\item Find an Hamiltonian path ($\HP$) visiting all nodes with weight no more than $2 \cdot \LB$.
\item Let $G'$ be the result of unfolding $\HP$ to a straight line, i.e., $G'$ is defined by $n$ points, situated on an interval, such that, the distance between every two points is the length of the path between them on the Hamiltonian path $\HP$ (specifically, the length of the whole interval is exactly the length of $\HP$).
\item 	Scale the above ($\HP$) interval length to $1$. 
\item 	Build a balanced binary search tree ($\BT$) over $G'$. 
\item 	Apply the algorithm of \thmref{optimalSolutionRootedTree} to $\BT$. Assign the appropriate budget to all edges in $\BT$ and $0$ to all other edges in $E$.
\end{enumerate}

\begin{theorem} \label{thm:approximationSolutionGeneralGraph}
	Let $G = (V,E)$ be a complete graph induced by some metric space $M = (V,d)$. Then, algorithm $\A$ results in a valid budget allocation to the edges of $E$ that approximates $\BR(G)$ by a $2 \log^2(n)$ factor.
\end{theorem}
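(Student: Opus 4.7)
The plan is to establish the approximation ratio by sandwiching: prove a lower bound $\BR(G) \ge \LB$ on the optimum, and prove an upper bound $2\LB \cdot \log^2 n$ on the radius achieved by algorithm $\A$. Dividing gives the claimed $2\log^2 n$ factor.

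For the lower bound, I would invoke \lemref{optimalallocationIsTree} to conclude that the support of the optimal allocation for $G$ is a spanning tree $H$ of $G$. Then, \lemref{budgetLargerThanSum} applied to $H$ (with any root) gives $\BR(G) \ge \sum_{e \in E_H}\length(e)$. Since $H$ is a spanning tree of $G$, its total length is at least the weight $\LB$ of a minimum spanning tree of $G$, yielding $\BR(G) \ge \LB$.

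For the upper bound I would chase the scaling through the five steps of $\A$. Step 1 produces $\HP$ of length at most $2\LB$ by the standard DFS-on-MST shortcutting argument (double the MST edges, take an Eulerian tour, shortcut using the triangle inequality, and drop one edge to turn the resulting Hamiltonian cycle into a path). Steps 2--3 re-express $\HP$ as $n$ points on the interval $[0,1]$, which amounts to uniformly scaling all lengths by $1/|\HP|$. By \lemref{completeGraphOnLineApproximation} applied to the balanced binary search tree $\BT$ built in step 4, there is a valid budget allocation on $\BT$ whose weighted radius on the scaled instance is at most $\log^2 n$; the allocation produced in step 5 by \thmref{optimalSolutionRootedTree} is at least as good, so its weighted radius on the scaled $\BT$ is also at most $\log^2 n$. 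Since the radius function scales linearly in edge lengths (weights equal $\length(e)/\B(e)$ and $\B$ is unchanged by rescaling $\length$), un-scaling multiplies the radius by $|\HP| \le 2\LB$, giving a weighted radius of at most $2\LB \cdot \log^2 n$ on $\BT$ with its original $\HP$-induced edge lengths. Finally, since each edge $(p_i,p_j) \in E_{\BT}$ has $\HP$-distance at least $d(p_i,p_j)$ (triangle inequality on the subpath of $\HP$ connecting them), the actual weight $d(p_i,p_j)/\B(e)$ in the metric $G$ is no larger than the weight used in the calculation above, so the radius in $G$ is also at most $2\LB \cdot \log^2 n$.

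The main obstacle is not conceptual but bookkeeping: keeping the scaling factor consistent between the interval embedding and the original metric, and convincing the reader that shortcutting the $\HP$ only decreases edge lengths (so that the allocation computed on $\BT$ remains valid and does not get worse when transplanted back to $G$). Once these points are handled carefully, combining the $\LB$ lower bound with the $2\LB \log^2 n$ upper bound completes the proof.
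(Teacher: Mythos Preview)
Your proposal is correct and follows essentially the same approach as the paper: establish $\BR(G)\ge \LB$ via \lemref{optimalallocationIsTree} and \lemref{budgetLargerThanSum}, and bound the algorithm's output by $2\LB\cdot\log^2 n$ via \lemref{completeGraphOnLineApproximation} after scaling. Your treatment is in fact more careful than the paper's own proof in two respects---you spell out the DFS-on-MST shortcutting for Step~1 and, importantly, you make explicit the triangle-inequality argument that transplanting the allocation from $\BT$ (with $\HP$-induced lengths) back to $G$ can only decrease edge weights; the paper leaves both of these implicit.
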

	
\begin{proof}
 First note that finding an Hamiltonian path ($\HP$) with weight no more than $2\cdot \LB$  is feasible using an $\MST$ for $G$. More importantly, note that by
 \lemref{budgetLargerThanSum}, it holds that $\LB$ is a lower bound
 on the optimal solution (i.e., on $\BR(G)$). This is true since an optimal budget allocation defines a tree (see \lemref{optimalallocationIsTree}), which has a total weight of at least $\LB$ (by the minimality of an $\MST$). Thus, algorithm $\A$ yields an optimal budget allocation for $\BT$, which by
 \lemref{completeGraphOnLineApproximation} yields a budget radius of at most $2 \cdot \LB \cdot \log^2(n) \le 2 \cdot \BR(G) \cdot \log^2(n)$. 
\end{proof}


\section{Conclusion and Future Work}
The paper introduces a new model for optimization problems on graphs. The suggested {\em budget} model was used to define facility location problems such as center and median point. For the tree case, optimal algorithms are presented for both aforementioned problems. For the general metric center point problem, an $O(\log^2(n))$ approximation algorithm is presented.
The new model raises a set of open problems e.g.,: i) Complexity: there is still a considerable gap between the hardness result
and the approximation factor in the general case of budget radius.
ii) Facility location: Find approximation algorithms for the k-center, 
k-median on general graphs.
iii) Graph optimization: minimizing the diameter of the budget graph.
\old{
In this paper we have presented a new model for optimization problems on graphs. This new {\em budget} model was used to define facility location problems such as center and median point. For the tree case, in both problem an optimal algorithms are presented. For the general-graph center point problem a $log^2(n)$ approximation algorithm is presented.
The new model raises a set of open problems e.g.,: i) Hardness: is the budget center point problem on general graphs is NP-hard\footnote{The discrete version of the center point budget problem is weakly NP-complete even on a path like graph (reduced to perfect {\em perfect partition problem}.}.
ii) Facility location: Define approximation algorithms for the k-center, k-median, and 1-median on general graphs.
iii) Graph Optimization: minimizing the diameter of the graph.
}
\bibliographystyle{plain} 
 \bibliography{BG}
 \newpage


 
\end{document}